\newcommand{\x}{\mathbf{x}}
\newcommand{\y}{\mathbf{y}}
\newcommand{\barD}{\bar{D} }
\newcommand{\K}{\mathcal{K}}
\newtheorem{definition}{\textbf{Definition}}
\newtheorem{corollary}{\textbf{Corollary}}
\newtheorem{theorem}{\textbf{Theorem}}
\newtheorem{proposition}{\textbf{Proposition}}
\newtheorem{remark}{\textbf{Remark}}
\newtheorem{lemma}{\textbf{Lemma}}
\newcommand{\nn}{\nonumber}
\newcommand{\mE}{\mathbb{E}}
\newcommand{\cH}{\mathcal{H}}
\newcommand{\cM}{\mathcal{M}}
\newcommand{\mmd}{\text{MMD}}
\begin{document}
\title{Nonparametric Composite Hypothesis Testing in an Asymptotic Regime}

\author{Qunwei~Li,~\IEEEmembership{Student~Member,~IEEE}, Tiexing~Wang, Donald~J.~Bucci,  Yingbin~Liang,~\IEEEmembership{Senior~Member,~IEEE},
Biao~Chen,~\IEEEmembership{Fellow,~IEEE} and Pramod~K.~Varshney,~\IEEEmembership{Life~Fellow,~IEEE}%
\thanks{This material is based upon work supported in part by the Defense Advanced
	Research Projects Agency under Contract No. HR0011-16-C-0135. The work of Y. Liang was also supported in part by the National Science Foundation under grant CCF-1801855. The work of B. Chen was also supported in part by the DDDAS program of the Air Force Office of Scientific Research under grant number FA9550-16-1-0077 and by the National Science Foundation under grant CNS-1731237.
	}
\thanks{Q.\ Li, T.\ Wang, B.\ Chen, and P.~K.\ Varshney are with the Department of Electrical Engineering and Computer Science, Syracuse University, Syracuse, NY 13244, USA
(e-mail: qli33@syr.edu; twang17@syr.edu; bichen@syr.edu; varshney@syr.edu).}
\thanks{D.~J.\ Bucci is with Lockheed Martin - Advanced Technology Labs, Cherry
Hill, NJ 08002, USA (e-mail: Donald.J.Bucci.Jr@lmco.com).}
\thanks{Y.\ Liang is with the Department of Electrical and Computer Engineering,
The Ohio State University, Columbus, OH 43210, USA (e-mail: liang.889@osu.edu).}
}

\maketitle
\begin{abstract}
We investigate the nonparametric, composite hypothesis testing problem for arbitrary unknown distributions in the asymptotic regime where both the sample size and the number of hypotheses grow exponentially large. Such asymptotic analysis is important in many practical problems, where the number of variations that can exist within a family of distributions can be countably infinite. We introduce the notion of \emph{discrimination capacity}, which captures the largest exponential growth rate of the number of hypotheses relative to the sample size so that there exists a test with asymptotically vanishing probability of error. Our approach is based on various distributional distance metrics in order to incorporate the generative model of the data. We provide analyses of the error exponent using the maximum mean discrepancy (MMD) and Kolmogorov-Smirnov (KS) distance and characterize the corresponding discrimination rates, i.e., lower bounds on the discrimination capacity, for these tests. Finally, an upper bound on the discrimination capacity based on Fano's inequality is developed. Numerical results are presented to validate the theoretical results.

\end{abstract}
\begin{IEEEkeywords}
Nonparametric hypothesis testing, channel coding, maximum mean discrepancy, Kolmogorov-Smirnov distance, error exponent
\end{IEEEkeywords}


\section{Introduction}

Information theory was largely developed in the context of communication systems, where information theoretic tools play an important role in characterizing the performance limits of such systems. However, another important area where information theoretic approach has proved useful is in statistical inference, e.g., hypothesis testing. For {\em parametric} hypothesis testing problems, information theoretic tools such as joint typicality, the equipartition property, and Sanov's theorem have been developed to characterize the error exponent \cite{Cover06,Gallager68,Csiszar81}. Information theory has also been applied to investigate a class of {\em parametric} hypothesis testing problems \cite{Han87,Han98}, where correlated data samples are observed over multiple terminals and data compression needs to be carried out in a decentralized manner. Additionally, information theory has also been applied to solve {\em nonparametric} hypothesis testing problems under the Neyman-Pearson framework \cite{Gutman89,Levitan02}.

In this paper, we apply information theoretic tools to study the nonparametric hypothesis testing problem, but with a focus on the average error probability instead of the Neyman-Pearson formulation. We address a more general scenario, where each hypothesis corresponds to a cluster of distributions. Such a nonparametric problem has not been thoroughly explored in the literature. We develop two nonparametric tests based respectively on the maximum mean discrepancy (MMD) and the Kolmogorov-Smirnov (KS) distance, and characterize the {\em exponential} error decay rate for these tests. Furthermore, in contrast to previous works where the number of hypotheses is assumed to be fixed, we study the regime where the number of hypotheses scales along with the sample size. This is analogous to the information theoretic channel coding problem where the number of messages scales along with the codeword length. Hence, in our study, information theory not only provides a technical tool to analyze the performance, but also provides an asymptotic perspective for understanding nonparametric hypothesis testing problems in the regime where the number of hypotheses is large, i.e., in the large-hypothesis large-sample regime.

More specifically, this paper assumes that there are $M$ hypotheses, each corresponding to a (cluster of) distributions, which are {\em unknown}. Sequences of length-$n$ training data samples generated by each distribution are available. {The more general case with the training data sequences having different lengths is discussed in Section \ref{sec:varyinglength}}. Suppose a length-$n$ test data stream is observed, which are samples generated by one of the distributions. The goal is to determine the cluster that contains the distribution that generated the observed test sequence. We are interested in the large-hypothesis regime, in which $M=2^{nD}$, i.e., the number of hypotheses scales exponentially in the {number of samples with a constant rate $D$}. The analogy to the channel coding problem \cite{Cover06} is now apparent where the exponent represents the transmission rate, i.e., the transmitted bits per channel use, for the channel coding problem, here $D$ represents the number of {\em hypothesis bits} that can be distinguished per observation sample. Correspondingly, we refer to $D$ as the {\em discrimination rate}, and the largest such value is referred to as the {\em discrimination capacity}. The notion of {\em discrimination capacity} provides the fundamental performance limit for the hypothesis testing problem in the large-hypothesis and large-sample regime.

\subsection{Main Contributions}
This paper makes the following major contributions.
\begin{itemize}

\item We provide an asymptotic viewpoint to understand the nonparametric hypothesis testing problem in the regime where the number of hypotheses scales exponentially in the sample size. Based on its connection to the channel coding problem, we introduce the notions of the discrimination rate and the discrimination capacity as the performance metrics in such an asymptotic regime.	
\item We develop two nonparametric approaches to solve the hypothesis testing problem that are based respectively on the maximum mean discrepancy (MMD) and the Kolmogorov-Smirnov (KS) distance. For both tests, we derive the corresponding error exponents and the discrimination rates. Our results show that as long as the number $M$ of hypotheses does not scale too fast, i.e., the scaling (discrimination) exponent is less than a certain threshold, the derived tests are exponentially consistent, i.e., the error probability converges to zero exponentially fast.
\item We also derive an upper bound on the discrimination capacity, which serves as an upper limit beyond which exponential consistency cannot be achieved by any nonparametric composite hypothesis testing rule.
\end{itemize}

\subsection{Related Work}

\textbf{$M$-ary hypothesis testing:} For {\em parametric} hypothesis testing problems, information theoretic tools have been developed to characterize the error exponent \cite{Cover06,Gallager68,Csiszar81,westover2008achievable}, and to study a class of distributed parametric hypothesis testing problems \cite{Han87,Han98,vazquez2016bayesian}. For sequential multi-hypothesis testing, information theoretic bounds on the sample size subject to constraints on the error probabilities have been developed in \cite{825826}. A generalization of the classical hypothesis testing problem is studied in \cite{ 6512566}, where a Bayesian decision maker is designed to enhance its information about the correct hypothesis. 
Information theory has also been applied to study {\em nonparametric} hypothesis testing problems with the primary focus being on the Neyman-Pearson formulation \cite{Gutman89,Levitan02}. An information-theoretic approach to the problem of a nonparametric hypothesis test with a Bayesian formulation is presented in \cite{ 1315943}. By factorizing dependent variables into mutually independent subsets, it has been shown that the likelihood ratio can be written as the sum of two sets of Kullback-Leibler divergence (KLD) terms, which is then used to quantify loss in hypothesis separability. Our study is different from the previous studies on nonparametric hypothesis testing problems in that we focus on the asymptotic regime where the number of hypotheses scales with sample size.

\textbf{Supervised learning:} The problem we study here can also be viewed as a supervised learning problem studied in the machine learning literature. However, the problem formulated here is different from the traditional supervised learning problem \cite{Bishop06}, where sample points corresponding to the same label are simply treated as individual samples, and their underlying statistical structure is not exploited in the design of classification rules. For example, the support vector machine (SVM) is one of the important classification algorithms for supervised learning, where the distance between samples is measured either by the Euclidean distance or by a kernel-based distance. Such distances do not exploit the underlying statistical distributions of data samples. {A robust form of the SVM in \cite{shivaswamy2006second} incorporates the probabilistic
uncertainty into the maximization of the margin. Our formulation exploits the underlying probabilistic structure of data samples, which is also robust to missing data, system noise, etc.}

A formulation of the supervised learning problem that is similar to our formulation has been studied previously in \cite{Muan2012}. The proposed approach, therein named support measure machine (SMM), exploits the kernel mean embedding to estimate the distance between probability distributions. In fact, the comparison between an SMM and an SVM also reflects the differences between our formulation and the traditional supervised learning problem.
However, the study in \cite{Muan2012} focused only on the regime with finite and fixed number of classes, and did not characterize the decay exponent of the error probability, whereas our focus is mainly on the asymptotic regime with infinite number of classes, and on the scaling behavior of the number of classes under which an asymptotically small error probability can be guaranteed. Nevertheless, the kernel-based approach developed in \cite{Muan2012} as well as in various other papers \cite{Srip2008,Fuku2009,Gretton2012} provide important techniques that we exploit in our study.

\textbf{Information theory in learning:} Quite a few recent studies have applied various notions in information theory for studying supervised learning problems.  A minimax approach for supervised learning, where the goal is to minimize the worst-case expected loss function over a certain set of probability distributions was developed in \cite{Farnia2016}. The designed classification rules are expected to be robust over datasets generated by any probability distribution in the set. 
A classification problem, where the observation is obtained via a linear mapping of a vector input was studied in  \cite{Nok2014}. The notion of classification capacity was proposed, which is similar to the discrimination capacity we propose. However, the results in \cite{Nok2014} are derived under the Gaussian model, whereas our formulation does not assume any specific distributions and is hence much more general. Furthermore, a parametric setting is implicitly assumed in \cite{Nok2014}, whereas our focus is on the nonparametric problem. {A connection between the hypothesis testing problems and channel coding was established in \cite{Nok2014}, compared to which this paper focuses primarily on the asymptotic case where the number of classes can scale.}
A supervised learning problem, where the joint distribution of the data sample and its label is assumed to be known but with an unknown parameter, was studied in \cite{Nok2016}. A classifier was proposed and the corresponding performance was analyzed. The connection of the problem to rate-distortion theory was explored. There are several key differences between the work in \cite{Nok2016} and our study. There is no notion of discrimination rate in \cite{Nok2016}, and the performance is not defined in terms of the asymptotic classification error probability. Additionally, our study does not assume any joint distribution of both the data sample and its label.

\section{Problem Formulation}
In this section, we first describe our composite nonparametric hypothesis model, and then connect it to the channel coding problem, which motivates several information theory related definitions that we will use to characterize system performance. For ease of readability, we also give preliminaries on the parametric hypothesis testing problem.

\subsection{Supervised Learning as Nonparametric Hypothesis Testing}\label{sec:nonparametric}
Consider the following nonparametric hypothesis testing problem with composite distributions. Suppose there are $M$ hypotheses, and each hypothesis corresponds to a set $\mathcal{P}_{m}$ of distributions for $m=1,\ldots,M$. For a given distance measure $d(p,q)$ between two probability distributions $p$ and $q$, we define
\begin{equation}
	\begin{aligned}
		d(\mathcal{P}_{m})& :=\sup_{p_{i},p_{i^{\prime}}\in\mathcal{P}_{m}}d(p_{i},p_{i^{\prime}}),\\
		d(\mathcal{P}_{m},\mathcal{P}_{m^{\prime}})&: =\inf_{p_{i}\in \mathcal{P}_{m},p_{i^{\prime}}\in\mathcal{P}_{m^{\prime}}}d(p_{i},p_{i^{\prime}}) \quad \text{for } m\neq m^{\prime}.
	\end{aligned}
\end{equation}
Hence, $d(\mathcal{P}_{m})$ represents the diameter of the $m$-th distribution set and $d(\mathcal{P}_{m},\mathcal{P}_{m^{\prime}})$ represents the inter-set distance between the $m$th and the $m^\prime$th sets.

We assume that
{
\begin{equation}\label{eq:KSassumptionHTS}
	\begin{aligned}
		\limsup_{ M \rightarrow \infty} \sup_{\substack{m=1,\ldots,M} }d(\mathcal{P}_{m})& < D_{I},\\
		\liminf_{M \rightarrow \infty}\inf_{\substack{m, m^\prime=1,\ldots,M\\m \neq m^{\prime}}}d(\mathcal{P}_{m},\mathcal{P}_{m^{\prime}}) & > D_O,\\
		D_{I} & < D_O.
	\end{aligned}
\end{equation}
}

That is, the intra-set distance (diameter) is always smaller than the inter-set distance for the composite hypothesis testing problem. {The actual values of $D_I$ and $D_O$ depend on the distance metrics used and are different.} { Furthermore, $\limsup\limits_{ M \rightarrow \infty}$ and $\liminf\limits_{ M \rightarrow \infty}$ in \eqref{eq:KSassumptionHTS} require that the conditions hold in the limit of asymptotically large $M$, i.e., the limit taken over the sequences of distribution clusters.}  We study the case where none of the distributions in the sets $\mathcal{P}_{m}$ for $m=1,\ldots,M$ are known. Instead, for $m=1,\ldots,M$, we assume that each distribution $p_{m,i_{m}}\in\mathcal{P}_{m}$, where $i_m \in I_1^{M_m} = \{1,2,\ldots,M_m\}$ is the index of the distribution, generates one training sequence {$\mathbf{x}_{m,i_{m}} \in \mathbb{R}^n$ consisting of {$n$} independently and identically distributed (i.i.d.)\ scalar training samples.} We use $\mathbf{X}_{m}$ to denote all training sequences generated by the distributions in $\mathcal{P}_{m}$. We assume that a test sequence {$\mathbf{y}\in \mathbb{R}^n$ of $n$ i.i.d.\ scalar samples} is generated by one of the distributions in one of the sets $\mathcal{P}_{m}$ for $m=1,\ldots,M$. The goal is to determine the hypothesis that the test sequence $\mathbf{y}$ belongs to, i.e., which set contains the distribution  that generated $\mathbf{y}$. {Note that since $M$ can scale with the number $n$ of samples (as we describe in the sequel), the assumption (2) should hold in the asymptotical regime as $n \rightarrow \infty$.}

{A practical example of the considered problem involves nonparametric detection of micro-Doppler modulated radar returns, such as those which occur in a ground moving target indicator (GMTI) radar \cite{5545175}. The micro-Doppler motion of a particular target generates a specific sideband structure, which varies within a distributional radius as the fundamental frequency of the target's micro motion changes, i.e., $D_I$. The difference between the fundamental sideband structure of the micro-Doppler modulations between different target types implies a distributional difference, i.e., $D_O$. This type of classification problem is clearly composite (based on an unknown fundamental modulation frequency), and a parametric realization is in many cases impractical as the specific physics of the movement can be very difficult to model in a closed form.}


Let $\delta(\{\mathbf{X}_{m}\}_{m=1}^{M},\mathbf{y})$ denote a test based on the given data. Then, the error probability for $\delta$ is defined as
\begin{align}\label{eq:classerrorprobHTS}
	{P}_{e}=&\sum_{m_{0}=1}^M {P}\bigg(\delta(\{\mathbf{X}_{m}\}_{m=1}^{M},\mathbf{y}) \neq m_{0}\big|\mathbf{y}\sim p_{m_{0},j}\in\mathcal{P}_{m_{0}}\bigg)\nn\\
&\cdot	{P}(m_{0}),
\end{align}
where ${P}(m_{0})$ is the \emph{a priori} probability that $\mathbf{y}$ is drawn from the $m_{0}$-th set of distributions.

For the above $M$-ary hypothesis testing problem, we are interested in the regime, in which the number $M$ of hypotheses scales with the number of samples. In particular, we assume $M=2^{nD}$, where the parameter $D$ captures how fast $M$ scales with $n$. We refer to $D$ as the {\em discrimination rate}.
\begin{definition}
We say that the discrimination rate $D$ is achievable, if there exists a classification rule $\delta$ for the multi-hypothesis testing problem such that the probability of error converges to zero as the number $n$ of observation samples converges to infinity.
\end{definition}
For a given composite hypothesis testing problem, we define the largest possible discrimination rate, $D$, to be the {\em discrimination capacity}, and denote it as $\barD$.

\subsection{Connection to the Channel Coding Problem}

Next, we discuss the connection between the asymptotic regime of the hypothesis testing problem and the channel coding problem studied in communications, which in fact motivated our definition of the discrimination rate and the discrimination capacity. 

In the channel coding problem (see Figure \ref{fig:channel}), assume there are $\cM=\{1,\ldots,2^{nR}\}$ messages to be transmitted with equal probability. An encoder maps each message $m\in \cM$ one-to-one onto a length-$n$ codeword $y_m^n=\{y_{m1},\ldots,y_{mn}\}$, which is transmitted over the channel. The channel maps each input symbol to an output symbol in a discrete memoryless fashion with the transition probability $P_{X|Y}(x|y)$ for each channel use, and the corresponding output sequence is given by $x^n=\{x_1,\ldots,x_n\}$. A decoder then estimates the original message as $\hat{m}$ based on the output sequence. 
Essentially, in the channel coding problem, there are a total of $M$ possible conditional distributions $p_m(x^n)=P_{X|Y}(x^n|y_m^n)$ given $y_m^n$, where $m=1,\ldots,M$, and the decoder determines which distribution $p^* \in \{ p_1,\ldots,p_{M} \}$ most probably generated the observed channel output $x^n$. 

The decoding process of the channel coding problem described above is a hypothesis testing problem. Inspired by the channel coding problem, our total number of hypotheses corresponds to the total number of messages in channel coding, and the discrimination rate $D$ we define corresponds to the communication rate $R$ in channel coding, which represents the transmitted message bits per coded symbol. By analogy, the discrimination rate $D$ can be interpreted as the number of class-bits that can be distinguished per observation sample. Similarly, the discrimination capacity $\bar{D}$ corresponds to the capacity in channel coding, and serves as the fundamental testing limit in hypothesis testing problems. {Note that in channel coding, the transmitter can choose to shape the distributions of transmitted symbols. Here, the hypothesis testing problem corresponds to the case where the distributions remain unshaped.}


Essentially, Shannon's channel coding theorem guarantees error-free transmission of an exponentially increasing number of messages provided that the transmission rate $R$ is less than the channel capacity $C$. In other words, Shannon's theorem implies that codewords $\{y^n\}$ can be designed such that  exponentially increasing number of conditional probability distributions can be distinguished given the channel output. Here, for the hypothesis testing problem, channel coding motivates us to investigate the following problems: 
\begin{itemize}
	\item Which tests distinguish an exponentially increasing number of hypotheses with asymptotically small error probability based on $n$ observation samples?
	\item What are the corresponding discrimination rates?
\end{itemize}

\begin{figure*}[t]
	\normalsize
	\centering
	\subfigure[An illustration of the channel coding problem.]{
		\includegraphics[width=.8\textwidth]{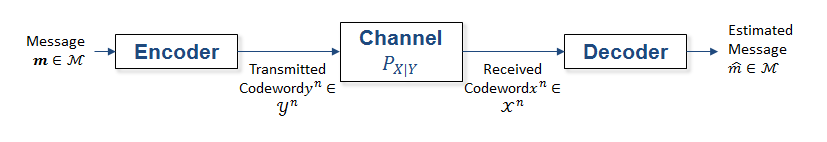}
		\label{fig:channel}
	} \\
	\subfigure[An illustration of the multiple hypothesis testing problem.]{
		\includegraphics[width=.7\textwidth]{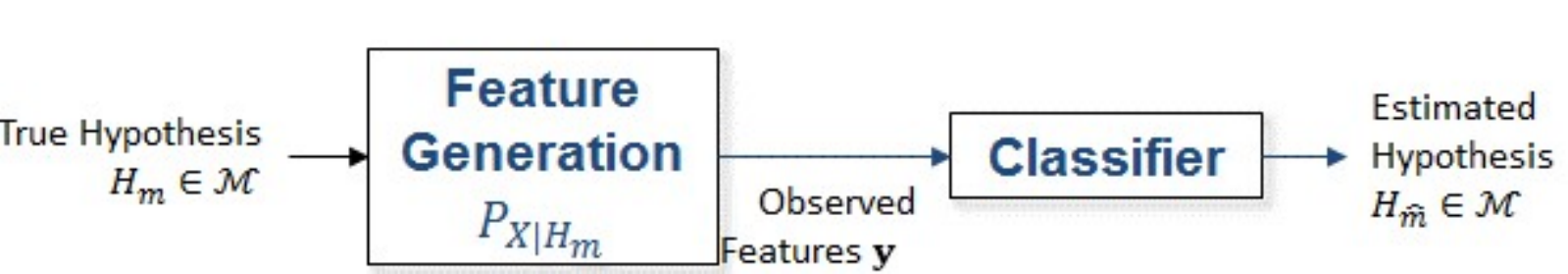}
		\label{fig:detection}
	}
	\caption{Illustrations of the channel coding problem and the multiple hypothesis testing problem}
	\label{figD}
\end{figure*}

%


\subsection{Preliminaries on Parametric Hypothesis Testing}\label{sec:para}

The aforementioned questions can be answered for the {\em parametric} hypothesis testing problem in the asymptotic regime based on existing studies, e.g., \cite{Cover06}. We explain this in detail for single distributions below as preliminary material before we delve into the main focus of this paper on the {\em nonparametric composite hypothesis testing problem}.


Consider the {\em parametric} hypothesis testing problem, where there are $M=2^{nD}$ {\em known} distinct distributions $p_1,...,p_M$ corresponding respectively to $M$ hypotheses. Given a test sequence {$\y$} consisting of $n$ i.i.d.\ samples generated from one of these distributions, the goal is to determine which hypothesis is true, i.e., which distribution $p_i$ generated the test sequence. 

We apply the likelihood test given by:
{
\begin{flalign}\label{eq:test0}
\delta(\y)=\arg\max_i P_{X|H_i}(\y) 
\end{flalign}}
where the test labels the observed test data as hypothesis $i$ if $p_i$ generates $\y$ with the largest probability. 
%
It can be shown \cite{Cover06} that the likelihood test in \eqref{eq:test0} is equivalently given by
{
\begin{flalign}\label{eq:test}
\delta(\y)=\arg\min_i D_{KL}(\gamma(\y)\|p_i),
\end{flalign}}
where $D_{KL}(\cdot\|\cdot)$ is the KLD between two distributions, and $\gamma(\cdot)$ is the empirical distribution of the sequence.
It suggests that the testing rule labels the test data as hypothesis $i$ if the empirical distribution of the test data is closest to $p_i$ in terms of KLD.

We next analyze the average error probability of the above testing rule as follows.
{
\begin{flalign}
P_e & =\frac{1}{M} \sum_{j=1}^{2^{nD}} P(\delta(\y) \neq j | H_j) \nn \\
&=\frac{1}{M} \sum_{j=1}^{2^{nD}} P(\exists\text{ $i\neq j$ s.t. } \mathcal{E}_1 |H_j) \nn \\
&\le \frac{1}{M} \sum_{j=1}^{2^{nD}} \sum_{i, i\neq j} P(\mathcal{E}_1 |H_j) \nn \\
& = \frac{1}{M} \sum_{j=1}^{2^{nD}} \sum_{i, i\neq j} \exp \{ -nC(p_i,p_j)\}\nn \\
&{ \leq 2^{nD-n\log e \liminf\limits_{M\rightarrow \infty}\min\limits_{1\leq i,j\leq M} C(p_i,p_j) },} \label{eq:error}
\end{flalign}
where $C(p_i,p_j)$ denotes the Chernoff distance}
\begin{align}
C(p_i,p_j)= \max_{0 \le t \le 1}-\log \int [p_i(p)]^{1-t}[p_j(p)]^tdp,
\end{align}
and $\mathcal{E}_1$ denotes the event that given $H_j$, the KLD between {$\y$} and $p_j$ is greater than the KLD between {$\y$} and $p_i$ for some $i\neq j$, i.e., for $i\neq j$,
{
\begin{align}
	D_{KL}(\gamma(\y)\|p_i)<D_{KL}(\gamma(\y)\|p_j).
\end{align}}
\noindent {Note that for simplicity, the default base for $\log$ in this paper is 2.} Thus, if { $D \leq \log e \liminf\limits_{M\rightarrow \infty}\min\limits_{1\leq i,j\leq M} C(p_i,p_j) $}, then the error probability is asymptotically small as $n$ goes to infinity, which proves the following proposition. 
\begin{proposition}
	For the parametric multiple hypothesis testing problem, the discrimination rate $D$ is achievable if 
	\[ { D \leq \log e \liminf\limits_{M\rightarrow \infty}\min\limits_{1\leq i,j\leq M} C(p_i,p_j).}\]
\end{proposition}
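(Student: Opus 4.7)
The proposition can be established by essentially a formalization of the error probability calculation sketched in equation \eqref{eq:error}, so the plan is to take the display preceding the proposition and convert its inequalities into a clean achievability argument. Concretely, I will fix the likelihood test $\delta$ from \eqref{eq:test0}, rewrite it via \eqref{eq:test} as the nearest-distribution-in-KLD rule applied to the empirical distribution $\gamma(\y)$, and then bound the average error probability under a uniform prior $P(m_0) = 1/M$.

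My first step is to decompose the error event under each hypothesis $H_j$ into the union of pairwise error events $\mathcal{E}_1(i,j) = \{ D_{KL}(\gamma(\y)\|p_i) < D_{KL}(\gamma(\y)\|p_j) \}$ for $i \neq j$, and to apply the union bound to obtain $P(\delta(\y) \neq j \mid H_j) \leq \sum_{i \neq j} P(\mathcal{E}_1(i,j) \mid H_j)$. Next, I will upper-bound each pairwise term by $\exp\{-n\, C(p_i,p_j)\}$, where $C(p_i,p_j)$ is the Chernoff distance. This is the main non-trivial step and it is the standard Chernoff-style large-deviations bound for binary hypothesis testing; I would either appeal directly to the result in \cite{Cover06} or give a short derivation via the method of types, writing the event $\mathcal{E}_1(i,j)$ in terms of type classes and optimizing the exponent $\sup_{0 \leq t \leq 1} -\log \int p_i^{1-t} p_j^{t}\, dp$ that defines $C(p_i,p_j)$.

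Having these ingredients, I would then substitute into the average error probability:
\begin{align}
P_e \leq \frac{1}{M} \sum_{j=1}^{M} \sum_{i \neq j} \exp\{-n\, C(p_i,p_j)\} \leq M \cdot \exp\Bigl\{ -n \min_{i \neq j} C(p_i,p_j) \Bigr\}.
\end{align}
Taking $M = 2^{nD}$ and converting the natural-logarithm exponent to base $2$ (which accounts for the $\log e$ factor in the statement), this becomes $P_e \leq 2^{-n(\log e \cdot \min_{i\neq j} C(p_i,p_j) - D)}$. Taking $\liminf_{M \to \infty}$ inside the minimum gives the condition stated in the proposition; if $D \leq \log e \cdot \liminf_{M\to\infty}\min_{i \neq j} C(p_i,p_j)$ (with strict inequality one gets exponential decay, and equality is handled by a standard $\epsilon$-slack argument letting $\epsilon \downarrow 0$), the error probability vanishes as $n \to \infty$, proving achievability of $D$.

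The main obstacle, should one want a fully self-contained proof, is justifying the Chernoff exponent $\exp\{-n\, C(p_i,p_j)\}$ for the pairwise test between $p_i$ and $p_j$ under the empirical-KLD decision rule, because the nearest-KLD rule from \eqref{eq:test} is not literally the binary likelihood ratio test but is equivalent to it for a pair of hypotheses. I would verify this equivalence by noting that $D_{KL}(\gamma(\y)\|p_i) < D_{KL}(\gamma(\y)\|p_j)$ reduces, after cancellation of the entropy term $H(\gamma(\y))$, to $\mbbE_{\gamma(\y)}[\log(p_j/p_i)] < 0$, which is exactly the log-likelihood-ratio test to which the Chernoff bound applies. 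All remaining steps (union bound, geometric series in $j$, exponent conversion) are routine.
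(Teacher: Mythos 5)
Your proposal follows essentially the same route as the paper: the likelihood test rewritten as the nearest-KLD rule, a union bound over pairwise error events, the Chernoff exponent $\exp\{-nC(p_i,p_j)\}$ for each pair, and the final bound $P_e \leq 2^{nD - n\log e\,\min_{i,j}C(p_i,p_j)}$ with $M = 2^{nD}$. The additional details you supply (the cancellation of the entropy term showing equivalence to the log-likelihood-ratio test, and the $\epsilon$-slack remark at the boundary) are correct refinements of steps the paper leaves implicit, but the argument is the same.
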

{ Hence, for the discrimination rate to be positive, we require that the smallest pairwise Chernoff information be bounded away from zero for asymptotically large $M$, i.e., the limit taken over the sequences of distribution clusters.}

\section{Main Results}
In this section, we obtain the performance bounds for the nonparametric hypothesis testing problem, with two different distance measures, i.e., MMD and KS distance.
\subsection{MMD-Based Test}\label{sec:MMDresults}
We construct a nonparametric hypothesis test based on the MMD distance between two distributions $p$ and $q$ \cite{Gretton2012} defined as follows
\begin{equation}
\mmd^2(p,q):=\|\mu_p-\mu_q\|_{\cH}.
\end{equation}
where $\mu_p(\cdot)$ maps a distribution $p$ into an element in a reproducing kernel Hilbert space (RKHS) associated with a kernel $k(\cdot,\cdot)$ as
\begin{align}
	\mu_p(\cdot)=\mE_p [k(\cdot,x)]=\int k(\cdot,x)dp(x).
\end{align}
{An unbiased estimator of $\mmd^2[p,q]$ based on $n$ samples of $\x=\{x_1,\ldots,x_n\}$ generated by distribution $p$ and $m$ samples of $\y=\{y_1,\ldots,y_m\}$ generated by distribution $q$, is given by \cite{Gretton2012}:
\begin{flalign}\label{eq:mmdu}
&\mmd^2(\x,\y)=\frac{1}{n(n-1)}\sum_{i=1}^n\sum_{j\neq i}^n k(x_i,x_j)\nn \\
& +\frac{1}{m(m-1)}\sum_{i=1}^m\sum_{j\neq i}^m k(y_i,y_j)-\frac{2}{nm}\sum_{i=1}^n\sum_{j=1}^m k(x_i,y_j).
\end{flalign}
Note that $x_i,y_i \in \mathbb R^d$, and the dimension $d \ge 1$.}

We employ the MMD to measure the distance between the test sequence and the training sequences, and declare the hypothesis of the test sequence to be the same as the training sequence that has the smallest MMD to the test sequence. The constructed MMD-based nonparametric composite hypothesis test is given by
\begin{align}\label{mmd_test}
\delta_{\mmd}(\{\mathbf{X}_{m}\}_{m=1}^{M},\mathbf{y})=\arg\min_{m,i_m} \mmd^2(\mathbf{x}_{m,i_{m}},\mathbf{y}).
\end{align}

 The following theorem characterizes the average probability of error performance of the proposed MMD-based test under composite distributions.
\begin{theorem}\label{comp_mmd}
	Suppose the MMD-based test in \eqref{mmd_test} is applied to the nonparametric composite hypothesis testing problem under assumption \eqref{eq:KSassumptionHTS}, where the kernel satisfies $0\leq k(x,y)\leq \K$ for all $(x,y)$. Then, the average probability of error is upper bounded as{
	\begin{equation*}
	{P}_{e} \leq  2^{nD}\exp \left(-\frac{n(D_O-D_I)^2}{{96\K^2}}\right).
	\end{equation*}}
	Thus, the achievable discrimination rate is
	{
	\begin{align}
	D = \frac{ \log e}{96\K^2} (D_O-D_I)^2.
	\end{align}
	}
\end{theorem}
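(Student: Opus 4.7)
My plan is to combine a union bound over hypotheses with a McDiarmid-type tail bound for the unbiased MMD$^2$ U-statistic. Condition on the true label $m_0$ and on the test sequence being generated by some specific $q\in\mathcal{P}_{m_0}$. The rule $\delta_{\mmd}$ fails only when, for some index $(m,i_m)$ with $m\neq m_0$, the empirical $\mmd^2$ between $\mathbf{x}_{m,i_m}$ and $\mathbf{y}$ undercuts the corresponding quantity for every training sequence in cluster $m_0$. In particular, it must undercut the sequence $\mathbf{x}_{m_0,i_{m_0}^\star}$, where I would choose $i_{m_0}^\star$ to be the training-distribution index in $\mathcal{P}_{m_0}$ whose population MMD to $q$ is \emph{largest} (still bounded by $D_I$ thanks to the diameter assumption in \eqref{eq:KSassumptionHTS}). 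Since every $p_{m,i_m}$ with $m\neq m_0$ satisfies $\mmd^2(p_{m,i_m},q)>D_O$ by the inter-set assumption, this choice guarantees a uniform population gap of at least $D_O-D_I$ on every pairwise comparison used in the union bound.

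For each fixed pair $(m,i_m)$ and the reference pair $(m_0,i_{m_0}^\star)$, I would work with the difference statistic
\begin{equation*}
S := \mmd^2(\mathbf{x}_{m,i_m},\mathbf{y}) - \mmd^2(\mathbf{x}_{m_0,i_{m_0}^\star},\mathbf{y}),
\end{equation*}
whose expectation is at least $D_O-D_I$. The pairwise-loss event $\{S\le 0\}$ is then contained in the deviation event $\{S-\mathbb{E}[S]\le -(D_O-D_I)\}$. Viewing $S$ as a function of the $3n$ independent kernel inputs $(\mathbf{x}_{m,i_m},\mathbf{x}_{m_0,i_{m_0}^\star},\mathbf{y})$, I apply McDiarmid's inequality. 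Using $0\le k\le\K$ together with the standard U-statistic bookkeeping, changing any single entry of either training sequence shifts $S$ by at most $O(\K/n)$, while changing a single entry of $\mathbf{y}$ shifts it by at most twice that, because $\mathbf{y}$ feeds both terms in $S$. Summing the squared bounded differences yields a variance proxy of order $\K^2/n$, and McDiarmid produces a tail bound of the form $\exp\!\bigl(-n(D_O-D_I)^2/(96\K^2)\bigr)$.

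A union bound over the (at most) $2^{nD}$ choices of the erroneous cluster index $m$---with the number of distributions per cluster taken as fixed and absorbed into the constant---then gives
\begin{equation*}
P_e \le 2^{nD}\exp\!\left(-\frac{n(D_O-D_I)^2}{96\K^2}\right),
\end{equation*}
which is the stated inequality. Requiring the right-hand side to vanish as $n\to\infty$ produces the achievable rate $D=\log e\cdot (D_O-D_I)^2/(96\K^2)$, the factor $\log e=1/\ln 2$ converting the nats in McDiarmid's exponent into the bits used to measure $D$.

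I expect the chief technical obstacle to be the tight accounting of the McDiarmid coefficients. Because $\mathbf{y}$ enters both U-statistics, its per-sample bounded-difference constant is inflated relative to those of the training sequences, and tracking the $1/(n(n-1))$ normalization of the unbiased MMD$^2$ carefully---without loosening constants en route---is what produces exactly $96\K^2$ in the denominator. A secondary wrinkle is that the assumption \eqref{eq:KSassumptionHTS} holds only in the limit $M\to\infty$, so I would work with slightly perturbed thresholds $D_I+\varepsilon$ and $D_O-\varepsilon$ when extracting the uniform gap and let $\varepsilon\downarrow 0$ at the end.
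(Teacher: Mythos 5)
Your proposal follows essentially the same route as the paper: condition on the true cluster, reduce the error event to pairwise comparisons against a fixed in-cluster reference sequence, form the difference statistic $S=\mmd^2(\x_{m,i_m},\y)-\mmd^2(\x_{m_0,i_{m_0}^\star},\y)$ with $\mathbb{E}[S]\ge D_O-D_I$ (by unbiasedness of the estimator and assumption \eqref{eq:KSassumptionHTS}), apply McDiarmid over the $3n$ inputs, and finish with a union bound over the $2^{nD}$ clusters. The structure is sound, but your bounded-difference accounting for the test samples is off in a way that matters for the stated constant: you claim a $\y$-entry perturbs $S$ by roughly twice as much as a training entry ``because $\y$ feeds both terms in $S$.'' In fact the $\y$-internal sums $\frac{1}{n(n-1)}\sum_{i}\sum_{j\ne i}k(y_i,y_j)$ appear identically in both $\mmd^2$ terms and cancel in the difference, so perturbing $y_s$ only moves the two cross terms $\frac{2}{n^2}\sum_l k(y_s,\x_{m,i_m}(l))-\frac{2}{n^2}\sum_l k(y_s,\x_{m_0,i_{m_0}^\star}(l))$; the resulting coefficient is the same $c_s=8\K/n$ as for the training entries. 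With all $3n$ coefficients equal to $8\K/n$ one gets $\sum_s c_s^2 = 192\K^2/n$ and hence the exponent $2(D_O-D_I)^2/(192\K^2/n)=n(D_O-D_I)^2/(96\K^2)$; if the $\y$ coefficients were genuinely doubled as you assert, the denominator would degrade to $192\K^2$ and you would not recover the theorem's constant. Fix that one cancellation and the rest of your argument (including the $\varepsilon$-perturbation of $D_I,D_O$ to handle the $\limsup/\liminf$ in \eqref{eq:KSassumptionHTS}, which the paper glosses over) goes through.
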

\begin{proof}
	See Appendix \ref{proof_comp_mmd}.
\end{proof}
Next, we study a special case where each hypothesis is associated with a single distribution, i.e., the $m$-th hypothesis is associated with only one distribution  $p_{m}$, $m=1,\ldots, M$. Then, we have the following corollary.
\begin{corollary}
\label{mmd}
	Suppose the MMD-based test is applied to the nonparametric hypothesis testing problem {under assumption \eqref{eq:KSassumptionHTS}}, and each hypothesis is associated with a single distribution, where the kernel satisfies $0\leq k(x,y)\leq \K$ for all $(x,y)$. Then, the average probability of error under equally probable hypotheses is upper bounded as
	{
	\begin{align}
	P_e  \leq 2^{nD-n\frac{\log e}{96\K^2} { \liminf\limits_{M\rightarrow \infty}\min\limits_{1\leq i,j\leq M}} \text{MMD}^4(p_i,p_j)}.
	\end{align}}
	Thus, the achievable discrimination rate is
	{
	\begin{align}
	D = \frac{\log e}{96\K^2} {\liminf\limits_{M\rightarrow \infty}\min\limits_{1\leq i,j\leq M}} \text{MMD}^4(p_i,p_j).
	\end{align}}
\end{corollary}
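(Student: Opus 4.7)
The plan is to derive this corollary as a direct specialization of Theorem~\ref{comp_mmd} to singleton distribution sets, rather than redoing the concentration analysis from scratch. The proof of the corollary should essentially amount to reading off the correct values of $D_I$ and $D_O$ from the hypotheses of the general theorem and plugging in.

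First I would observe that when the $m$-th hypothesis is associated with only a single distribution, i.e. $\mathcal{P}_m = \{p_m\}$, the diameter reduces to
\begin{equation*}
d(\mathcal{P}_m) = \sup_{p,q \in \{p_m\}} \mmd^2(p,q) = 0,
\end{equation*}
so the condition $\limsup_{M \to \infty} \sup_m d(\mathcal{P}_m) < D_I$ in \eqref{eq:KSassumptionHTS} is satisfied for any $D_I > 0$; taking $D_I \downarrow 0$ is the natural choice. Simultaneously, the inter-set distance becomes
\begin{equation*}
d(\mathcal{P}_m,\mathcal{P}_{m'}) = \mmd^2(p_m,p_{m'}),
\end{equation*}
so one may take $D_O$ to be any value strictly less than $\liminf_{M\to\infty}\min_{i\neq j}\mmd^2(p_i,p_j)$, which is assumed positive by the hypothesis of the corollary (inherited from \eqref{eq:KSassumptionHTS}).

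Next I would invoke Theorem~\ref{comp_mmd} directly to get
\begin{equation*}
P_e \leq 2^{nD} \exp\!\left(-\frac{n(D_O - D_I)^2}{96\K^2}\right),
\end{equation*}
then let $D_I \downarrow 0$ and $D_O$ approach $\liminf_{M\to\infty}\min_{i\neq j}\mmd^2(p_i,p_j)$ to obtain
\begin{equation*}
P_e \leq 2^{nD} \exp\!\left(-\frac{n\,\liminf\limits_{M\to\infty}\min\limits_{1\leq i,j\leq M}\mmd^4(p_i,p_j)}{96\K^2}\right).
\end{equation*}
A routine base-conversion $\exp(-x) = 2^{-x\log e}$ rewrites this in the form stated in the corollary. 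The exponential consistency condition then reads $D < \tfrac{\log e}{96\K^2}\liminf_{M\to\infty}\min_{i,j}\mmd^4(p_i,p_j)$, giving the claimed achievable discrimination rate.

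The only subtle point, and thus the main thing I would verify carefully rather than brushing aside, is the compatibility of the distance interpretation: the test \eqref{mmd_test} compares $\mmd^2$ values, and Theorem~\ref{comp_mmd} is stated in terms of the same underlying distance $d$ that appears in \eqref{eq:KSassumptionHTS}, so $D_I$ and $D_O$ must be read as bounds on $\mmd^2$, not on $\mmd$ itself. This is precisely why the factor $(D_O - D_I)^2$ in Theorem~\ref{comp_mmd} translates into $\mmd^4(p_i,p_j)$ rather than $\mmd^2(p_i,p_j)$ in the corollary, and it is worth a sentence in the proof to make this matching of units explicit. No new concentration inequalities or combinatorial unions are needed beyond those already done in the proof of Theorem~\ref{comp_mmd}.
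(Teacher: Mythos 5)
Your proposal is correct and follows essentially the same route as the paper: the paper's own proof also specializes Theorem~\ref{comp_mmd} by setting $D_I=0$ and $D_O=\liminf_{M\rightarrow\infty}\min_{1\leq i,j\leq M}\mmd^2(p_i,p_j)$ and then converting the exponential to base $2$. Your limiting argument ($D_I\downarrow 0$, $D_O$ approaching the $\liminf$ from below) is a slightly more careful handling of the strict inequalities in \eqref{eq:KSassumptionHTS} than the paper's direct substitution, but the substance is identical.
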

{Note that, for the discrimination rate to be positive, we require the smallest pairwise MMD between the distributions to be bounded away from zero for asymptotically large $M$, where the limit is taken over the sequences of distribution clusters.}
\begin{proof}
	By Theorem \ref{comp_mmd}, we set $D_I=0$ and $D_O={\liminf\limits_{M\rightarrow \infty}\min\limits_{1\leq i,j\leq M}} \mmd^2(p_i,p_j)$.
	\noindent Therefore, we can bound the probability of error as the number of classes scales according to $M=2^{nD}${
	\begin{align}
		P_e & \le M \exp \left(-\frac{n{ \liminf\limits_{M\rightarrow \infty}\min\limits_{1\leq i,j\leq M}}\text{MMD}^4(p_i,p_j)}{96\K^2}\right) \nn \\
		&\leq 2^{nD-n\frac{\log e}{96\K^2} {\liminf\limits_{M\rightarrow \infty}\min\limits_{1\leq i,j\leq M}}\mmd^4(p_i,p_j)}.
	\end{align}}
	
	\noindent Then, it is straightforward to obtain thebuyong  achievable discrimination rate for the MMD test as
	{
	\begin{align}
		D = \frac{\log e}{96\K^2}{\liminf\limits_{M\rightarrow \infty}\min\limits_{1\leq i,j\leq M}} \text{MMD}^4(p_i,p_j).
	\end{align}}
\end{proof}

\subsection{Kolmogorov-Smirnov Test}\label{sec:KSresults}
				
				
				In this section, we construct a nonparametric hypothesis testing test based on the KS distance defined as follows. Suppose $\mathbf{x}=\{x_{1},\ldots,x_{n}\}$, and i.i.d. samples $x_i\in \mathbb{R},$ are generated by the distribution $p$. Then the empirical CDF of $p$ is given by
				{\begin{equation}
				F_{\mathbf{x}}(a)=\frac{1}{n}\sum_{i=1}^{n}1_{[-\infty,a]}(x_{i}),
				\end{equation}}
				where $1_{[-\infty,x]}$ is the indicator function. The KS distance between $\mathbf{x}$ and $\mathbf{y}$ having respectively been generated by $p$ and $q$ is defined as
				{
				\begin{equation}
				D_{KS}(\mathbf{x},\mathbf{y}) = \sup_{a\in\mathbb{R}}|F_{\mathbf{x}}(a)- F_{\mathbf{y}}(a)|.
				\end{equation}}
				
				We construct the following KS based nonparametric composite hypothesis test
				\begin{align}\label{ks_test}
				\delta_{KS}(\{\mathbf{X}_{m}\}_{m=1}^{M},\mathbf{y})=\arg\min_{m,i_m} D_{KS}(\mathbf{x}_{m,i_{m}},\mathbf{y}),
				\end{align}

				The following theorem characterizes the performance of the proposed KS-based test.
\begin{theorem}\label{comp_ks}
	Suppose the KS-based test in \eqref{ks_test} is applied to the nonparametric hypothesis testing problem under assumption \eqref{eq:KSassumptionHTS}. Then, the average probability of error is upper bounded as
	{
	\begin{equation*}
	{P}_{e} \leq 6\cdot 2^{nD}\exp\big(-\frac{n(D_O-D_{I})^{2}}{8}\big).
	\end{equation*}}
	Thus, the achievable discrimination rate is
	{
	\begin{align}
	D = \frac{\log e}{8} (D_O-D_I)^2.
	\end{align}}
\end{theorem}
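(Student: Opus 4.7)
The plan is to mirror the proof of Theorem \ref{comp_mmd}: translate the error event into deviations of empirical CDFs from their underlying CDFs, control those deviations with the Dvoretzky--Kiefer--Wolfowitz (DKW) inequality, and close with a union bound. Three ingredients drive the argument: (i) $D_{KS}$ is a metric, so it satisfies the triangle inequality and its reverse form for any three CDFs; (ii) DKW gives
\begin{equation*}
\Pr\bigl(D_{KS}(\mathbf{x}, F_p) > \epsilon\bigr) \le 2\exp(-2n\epsilon^2)
\end{equation*}
for any length-$n$ i.i.d.\ sample $\mathbf{x}$ from a distribution with CDF $F_p$; (iii) under \eqref{eq:KSassumptionHTS} instantiated with the KS metric, any intra-cluster CDF pair satisfies $D_{KS}\le D_I$ and any inter-cluster pair satisfies $D_{KS}\ge D_O$.

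I would first condition on hypothesis $m_0$ being true, so that $\mathbf{y}$ is i.i.d.\ from some distribution $q\in\mathcal{P}_{m_0}$ with CDF $F_q$, and write $F_{m,i_m}$ for the CDF of $p_{m,i_m}$. Fixing $\epsilon<(D_O-D_I)/4$, I define the good event $\mathcal{G}$ to be the intersection of $D_{KS}(\mathbf{y},F_q)\le\epsilon$ with $D_{KS}(\mathbf{x}_{m,i_m},F_{m,i_m})\le\epsilon$ for every training index $(m,i_m)$. On $\mathcal{G}$, two applications of the triangle inequality together with the intra-cluster bound $D_{KS}(F_{m_0,i_{m_0}},F_q)\le D_I$ yield, for every $i_{m_0}$,
\begin{equation*}
D_{KS}(\mathbf{x}_{m_0,i_{m_0}},\mathbf{y})\le 2\epsilon+D_I,
\end{equation*}
while the reverse triangle inequality combined with the inter-cluster bound $D_{KS}(F_{m,i_m},F_q)\ge D_O$ yields, for every $m\neq m_0$ and $i_m$,
\begin{equation*}
D_{KS}(\mathbf{x}_{m,i_m},\mathbf{y})\ge D_O-2\epsilon.
\end{equation*}
By the choice $4\epsilon<D_O-D_I$, the class-$m_0$ upper bound sits strictly below every other-class lower bound, so the $\arg\min$ in \eqref{ks_test} must lie in class $m_0$; no error occurs on $\mathcal{G}$.

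I would then apply DKW together with a union bound to the $M+1$ empirical-CDF events defining $\mathcal{G}$ (one for $\mathbf{y}$ and one for each of the $M$ training sequences, taking one training sample per class), giving $\Pr(\mathcal{G}^c)\le 2(M+1)\exp(-2n\epsilon^2)$. Letting $\epsilon\uparrow(D_O-D_I)/4$, substituting $M=2^{nD}$, and using $2(M+1)\le 6M$ for $M\ge 1$ produces the stated bound
\begin{equation*}
P_e\le 6\cdot 2^{nD}\exp\bigl(-n(D_O-D_I)^2/8\bigr),
\end{equation*}
which transfers to the average error probability because the hypothesis-conditional estimate is uniform in $m_0$. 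The achievable rate $D=(\log e)(D_O-D_I)^2/8$ then falls out because the right-hand side vanishes asymptotically precisely when $D$ is strictly below that value.

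The hard part will be that the union bound spans an exponentially growing collection ($M=2^{nD}$) of training sequences, so DKW's exponential tail must dominate this exponential count; this is precisely what forces the sharp quadratic dependence on the separation gap $D_O-D_I$. The only subtle calibration is the choice $\epsilon=(D_O-D_I)/4$, which balances $2\epsilon+D_I$ against $D_O-2\epsilon$ and thereby maximizes the resulting exponent $2\epsilon^2=(D_O-D_I)^2/8$.
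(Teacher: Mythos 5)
Your proposal is correct and follows essentially the same route as the paper: the key lemma is the same DKW/Massart bound on empirical CDF deviations (the paper's Lemma \ref{lemma:KStest1}), the triangle-inequality reduction to the separation gap is the same, and the calibration $\epsilon=(D_O-D_I)/4$ yielding the exponent $2\epsilon^2=(D_O-D_I)^2/8$ is identical. The only organizational difference is that you union-bound a single global concentration event over all $M+1$ empirical CDFs and argue the test is deterministically correct on it, whereas the paper bounds each pairwise error event separately (its Lemma \ref{lemma:KStest4}, allotting $\hat{d}/4$ to each of three deviations) and then unions over pairs; this affects only the constant prefactor, not the exponent or the achievable rate.
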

\begin{proof}
	See Appendix \ref{proof_comp_ks}.
\end{proof}

We next consider the case where each hypothesis is associated with a single distribution.
Then, we have the following corollary.
\begin{corollary}\label{ks}
			Suppose the KS-based test is applied to the nonparametric hypothesis testing problem { under assumption \eqref{eq:KSassumptionHTS}}, and each hypothesis is associated with a single distribution. Then, the average probability of error under equally probable hypotheses is upper bounded as{
				\begin{align}
				P_e  \leq 6\cdot 2^{nD-n\frac{\log e }{8}{\liminf\limits_{M\rightarrow \infty}\min\limits_{1\leq i,j\leq M}}d_{KS}^{2}(p_i,p_j)}.
				\end{align}}
				Thus, the achievable discrimination rate is{
				\begin{align}
				D = \frac{\log e }{8}{ \liminf\limits_{M\rightarrow \infty}\min\limits_{1\leq i,j\leq M}}d_{KS}^{2}(p_i,p_j).
				\end{align}}
\end{corollary}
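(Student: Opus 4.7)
My plan is to obtain the corollary as a direct specialization of Theorem~\ref{comp_ks}, mirroring the argument used to derive Corollary~\ref{mmd} from Theorem~\ref{comp_mmd}. The key observation is that when each hypothesis $m$ corresponds to a single distribution $p_m$, the set $\mathcal{P}_m = \{p_m\}$ is a singleton, so its diameter is
\[
d_{KS}(\mathcal{P}_m) = \sup_{p,p' \in \mathcal{P}_m} d_{KS}(p,p') = 0,
\]
which allows me to take $D_I = 0$ in the hypothesis \eqref{eq:KSassumptionHTS}. Similarly, the inter-set distance reduces to a pairwise distance between the two (single) distributions, so I can choose
\[
D_O = \liminf_{M \rightarrow \infty} \min_{\substack{1 \leq i,j \leq M \\ i \neq j}} d_{KS}(p_i, p_j),
\]
provided this quantity is positive (which is the effective condition \eqref{eq:KSassumptionHTS} reduces to in the singleton case).

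With these choices in place, I would simply substitute into the bound provided by Theorem~\ref{comp_ks}. The theorem gives
\[
P_e \leq 6 \cdot 2^{nD} \exp\!\Big(-\tfrac{n(D_O - D_I)^2}{8}\Big) = 6 \cdot 2^{nD} \exp\!\Big(-\tfrac{n D_O^2}{8}\Big).
\]
Converting the natural exponential to base $2$ using $\exp(-x) = 2^{-x \log e}$ and plugging in the expression for $D_O$ yields
\[
P_e \leq 6 \cdot 2^{nD - n\tfrac{\log e}{8} \liminf\limits_{M\rightarrow \infty}\min\limits_{1 \leq i,j \leq M} d_{KS}^2(p_i,p_j)},
\]
which is the claimed bound. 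The achievable discrimination rate then follows immediately: the exponent is negative (so $P_e \to 0$) whenever
\[
D < \frac{\log e}{8} \liminf_{M \rightarrow \infty} \min_{1 \leq i,j \leq M} d_{KS}^2(p_i,p_j),
\]
establishing achievability of the stated rate.

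Since the heavy lifting (concentration of the empirical CDFs via a DKW-type inequality and the union bound over $2^{nD}$ training sequences) has already been done inside the proof of Theorem~\ref{comp_ks}, there is essentially no new obstacle here. The only subtle point I would be careful about is matching the limits: the $\liminf_{M\to\infty}$ and $\min$ in the corollary statement must be read in tandem with the assumption that the minimum pairwise KS distance remains bounded away from zero as $M$ grows, so that $D_O > 0 = D_I$ in the asymptotic sense required by \eqref{eq:KSassumptionHTS}. Once this is noted, the corollary reduces to a two-line substitution into Theorem~\ref{comp_ks}.
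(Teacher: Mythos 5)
Your proposal is correct and follows essentially the same route as the paper's own proof: set $D_I=0$ (singleton clusters) and $D_O=\liminf_{M\rightarrow\infty}\min_{1\leq i,j\leq M}d_{KS}(p_i,p_j)$, then substitute into the bound of Theorem~\ref{comp_ks} and convert the exponential to base $2$. Your added remark about needing the minimum pairwise KS distance to stay bounded away from zero so that $D_O>D_I$ holds asymptotically matches the caveat the paper states immediately after the corollary.
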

{ Hence, for the discrimination rate to be positive, we require the least pairwise KS distance between distributions to be bounded away from zero for asymptotically large $M$, where the limit is taken over the sequences of distribution clusters.}
		\begin{proof}
	By Theorem \ref{comp_ks}, we set $D_{I} = 0$ and $D_O={\liminf\limits_{M\rightarrow \infty}\min\limits_{1\leq i,j\leq M}}d_{KS}(p_{i},p_{j})$, and have
	{
	\begin{equation*}
		\begin{aligned}
			P_{e} & \leq 6\cdot 2^{nD-n\frac{\log e}{8}D_{o}^{2}}\\
			&\leq 6\cdot 2^{nD-n\frac{\log e}{8}{ \liminf\limits_{M\rightarrow \infty}\min\limits_{1\leq i,j\leq M}}d_{KS}^2(p_{i},p_{j})}.
		\end{aligned}
	\end{equation*}}
	
	\noindent Then, it is straightforward to obtain the following  achievable discrimination rate for the KS test as
	{
	\begin{align}
		D = \frac{\log e }{8}{\liminf\limits_{M\rightarrow \infty}\min\limits_{1\leq i,j\leq M}}d_{KS}^{2}(p_i,p_j).
	\end{align}}
		\end{proof}
		
\subsection{Upper Bound on the Discrimination Capacity}\label{sec:fanoResults_comp}

In this section, we provide an upper bound on the discrimination capacity for the composite hypothesis testing problem. Let $h$ be a random index representing the actual hypothesis that occurs. We assume that $h$ is uniformly distributed over the $M$ hypotheses, and $h^\prime$ has the same distribution as $m$, but is independent from $h$. { Then, Lemma 2.10 in \cite{Tsybakov:2008:INE:1522486} directly yields the following upper bound on the discrimination capacity $\bar{D}$. }
\begin{remark}\label{dis_capacity_comp}
	The discrimination capacity $\bar{D}$ is upper bounded as
	\begin{align}
	\bar{D} \leq { \limsup_{M\rightarrow \infty}} \mE_{h,h^\prime} D_{KL}(p_h\|p_{h^\prime}),
	\end{align}
	where $D_{KL}(\cdot\|\cdot)$ is the KLD between two distributions.
\end{remark}
{ Note that the above limit $\limsup_{M\rightarrow \infty}$ is taken over the sequences of distribution clusters.}

{ In Appendix \ref{proof_capacity_comp}, we provide an alternative but simpler proof based on Fano's inequality for the above upper bound, which is closely related to the proposed concept of discrimination capacity.}
%

{
\subsection{Training Sequences of Unequal Length}\label{sec:varyinglength}
In this subsection, we discuss the impact of different number of training samples in different classes on the probability of error and the discrimination rate. Here, we still assume that there are $n$ test samples. To keep the problem formulation meaningful, we assume that the number $M$ of classes increases exponentially with $n$ at a rate $D$, i.e., $M=2^{nD}$.  To avoid notational confusion, we use the non-composite case, i.e., with each class corresponding to one distribution, to illustrate the idea. Suppose that each class, i.e., each distribution, generates $\gamma_m(n)$ training samples, for $m=1, \ldots,M$, where $\gamma_m(n)$ represents the number of samples in the $m$-th class (as a function of $n$). Let $\gamma_{\min}(n)=\min_{1\leq m\leq M} \gamma_m(n)$. 
In particular, for the MMD-based test, the probability of error can be bounded as
	\begin{align}\label{Pe-multi-length-MMD}
	P_e \le 2^{n\left(D-\min\{1,\frac{\gamma_{\min}(n)}{n}\}\frac{\log e (D_O-D_I)^2}{96\mathcal{K}^2}\right)}.
	\end{align}
	For the KS-based test, the probability of error can be bounded as
	\begin{align}\label{Pe-multi-length-KS}
	P_e \le 6\cdot 2^{n\left(D-\min\{1,\frac{\gamma_{\min}(n)}{n}\}\frac{\log e (D_O-D_I)^2}{8}\right)}.
	\end{align} 
It can be seen that here the ratio $\frac{\gamma_{\min}(n)}{n}$ plays an important role in determining the error exponent asymptotically. For example, for the MMD-based test, if the ratio converges to zero for large $n$, i.e., the shortest training length $\gamma_{\min}(n)$ scales as an order-level slower than the test length, then there is no guarantee of exponential error decay, and the discrimination rate equals zero. On the other hand, if $\lim_{n\rightarrow \infty} \frac{\gamma_{\min}(n)}{n} =c$ with $0 < c <1$, then the discrimination rate $D=c\frac{\log e (D_O-D_I)^2}{96\mathcal{K}^2}$. Furthermore, if $\lim_{n\rightarrow \infty} \frac{\gamma_{\min}(n)}{n} =c$ with $c\ge 1$, then the discrimination rate $D=\frac{\log e (D_O-D_I)^2}{96\mathcal{K}^2}$. A sketch of the proof of \eqref{Pe-multi-length-MMD} and \eqref{Pe-multi-length-KS} can be found in Appendix \ref{multi-length-proof-sketch}

}

\section{Numerical Results}\label{sec:numResults}

In this section, we present numerical results to compare the performance of the proposed tests. In the experiment, the number of classes is set to be five, and the error probability versus the number of samples for the proposed algorithms is plotted. For the MMD based test, we use the standard Gaussian kernel given by $k(x,x^\prime)=\exp(-\frac{\|x-x^\prime\|^2}{2})$.
\begin{figure}[!htb]
\centering
\includegraphics[width=0.45\textwidth, height=2.5in]{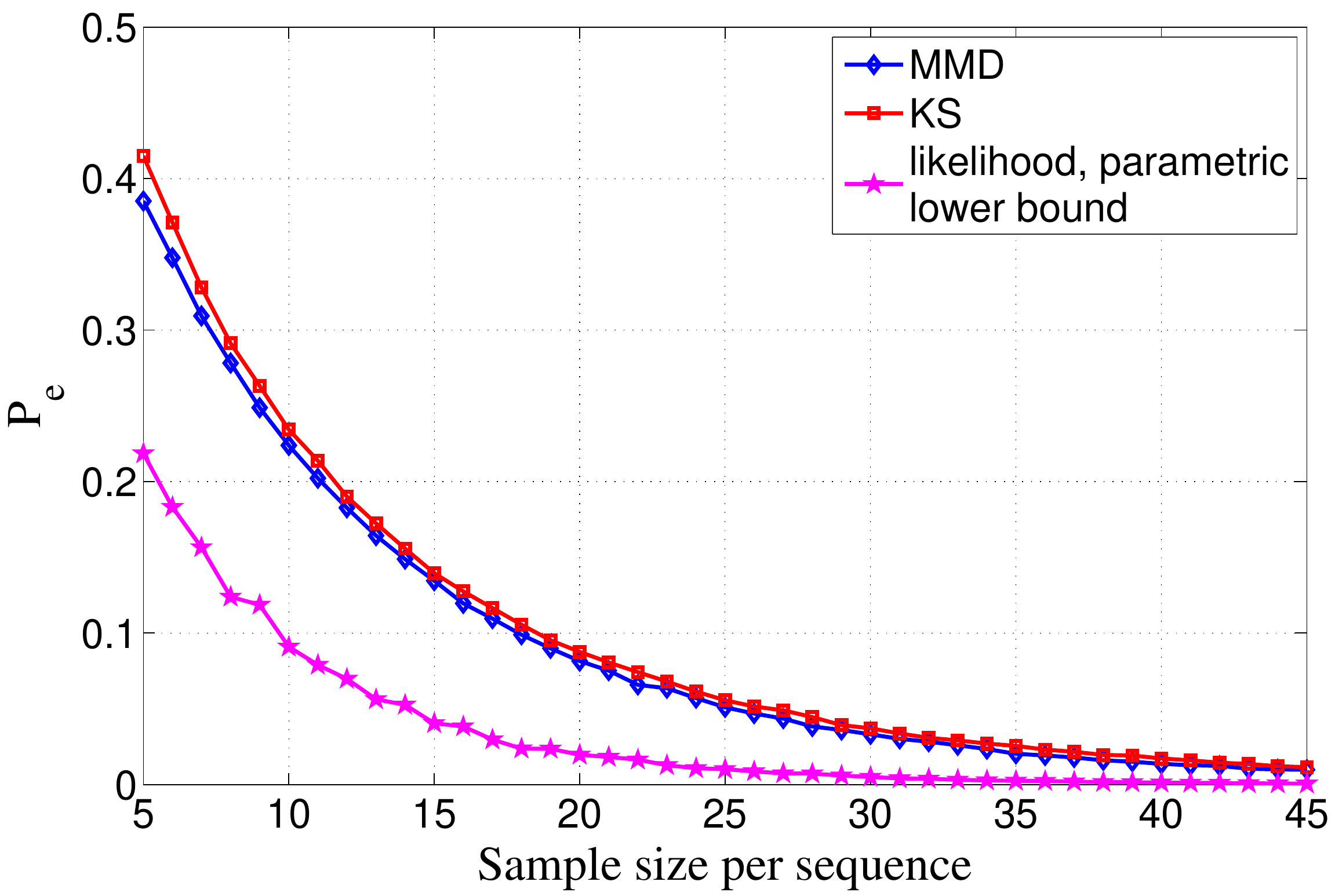}
\caption{Comparison of error probabilities of different hypothesis testing algorithms for Gaussian distributions with different means.}
\label{fig:meandiff}
\end{figure}

\begin{figure}[!htb]
\centering
\includegraphics[width=0.45\textwidth,height=2.5in]{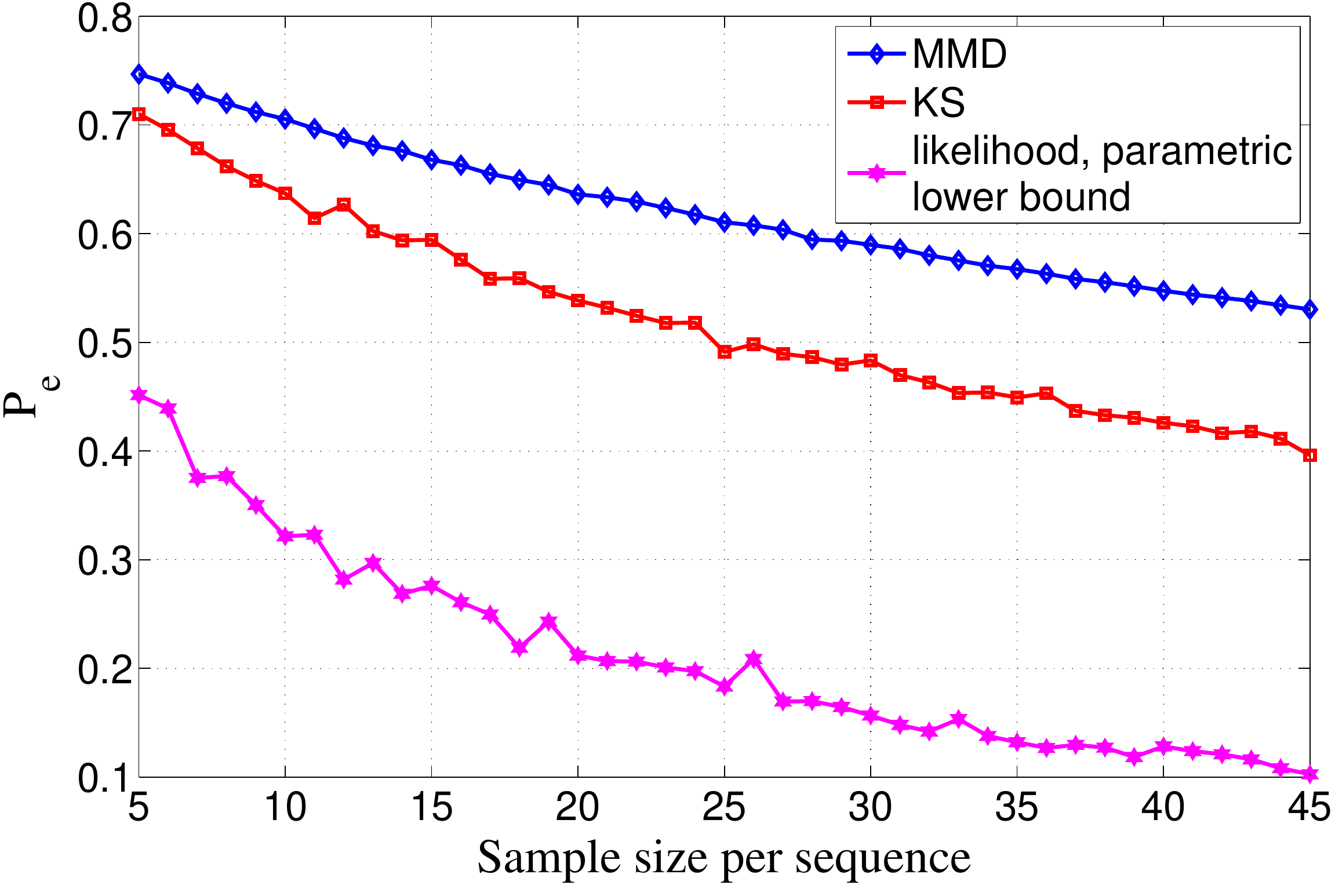}
\caption{Comparison of error probabilities of different hypothesis testing algorithms for Gaussian distributions with different variances.}
\label{fig:variancediff}
\end{figure}

In the first experiment, all the hypotheses correspond to Gaussian distributions with the same variance $\sigma^2=1$ but different mean values $\mu=\{-2,-1,0,1,2\}$. A training sequence is drawn from each distribution and a test sequence is randomly generated from one of the five distributions. The sample size of each sequence ranges from $5$ to $45$. A total of $10^{5}$ monte carlo runs are conducted. The simulation results are given in Figure~\ref{fig:meandiff}. It can be seen that all the tests give better performance as the sample size $n$ increases. We can also see that the MMD-based test slightly outperforms the KS-based test. We also provide results for the parametric likelihood test as a lower bound on the probability of error for performance comparison. It can be seen that the performance of the two nonparametric tests are close to the parametric likelihood test even with a moderate number of samples.

In the second experiment, all the hypotheses correspond to Gaussian distributions with the same mean $\mu=1$ but different variance values $\sigma^2 = \{0.5^2,1^2,1.5^2,2^2,2.5^2\}$. The simulation results are given in Fig.~\ref{fig:variancediff}. In this experiment, the MMD-based test yields the worst performance, which suggests that this method is not suitable when the distributions overlap substantially with each other. 
The two simulation results also suggest that none of the three tests perform the best universally over all distributions. Although there is a gap between the performance of MMD and KS tests and that of the parametric likelihood test, we observe that the error decay rates of these tests are still close.

{To show the tightness of the bounds derived in the paper, we provide a table (See Table I) of error decay exponents (and thus the discrimination rates) for different algorithms. 
\begin{table}[]
	\centering
	\caption{Comparison of Bounds}
	\label{my-label}
	\begin{tabular}{l|l|l|l|l|l}
		\cline{2-5}
		\multirow{2}{*}{}                  & \multicolumn{2}{l|}{Lower Bounds} & \multicolumn{2}{l|}{Upper Bounds}   &  \\ \cline{2-5}
		& KS         & MMD        & Parametric & FB           &  \\ \cline{1-5}
		\multicolumn{1}{|l|}{Empirical}  & 0.0897          & 0.0916          & 0.146      & 2.5                    &  \\ \cline{1-5}
		\multicolumn{1}{|l|}{Theoretical} & 0.0183          & 0.0071          & 0.125      & \multicolumn{1}{c|}{-} &  \\ \cline{1-5}
	\end{tabular}
\end{table}
\begin{figure}[!htb]
	\centering
	\includegraphics[width=0.5\textwidth,height=2.5in]{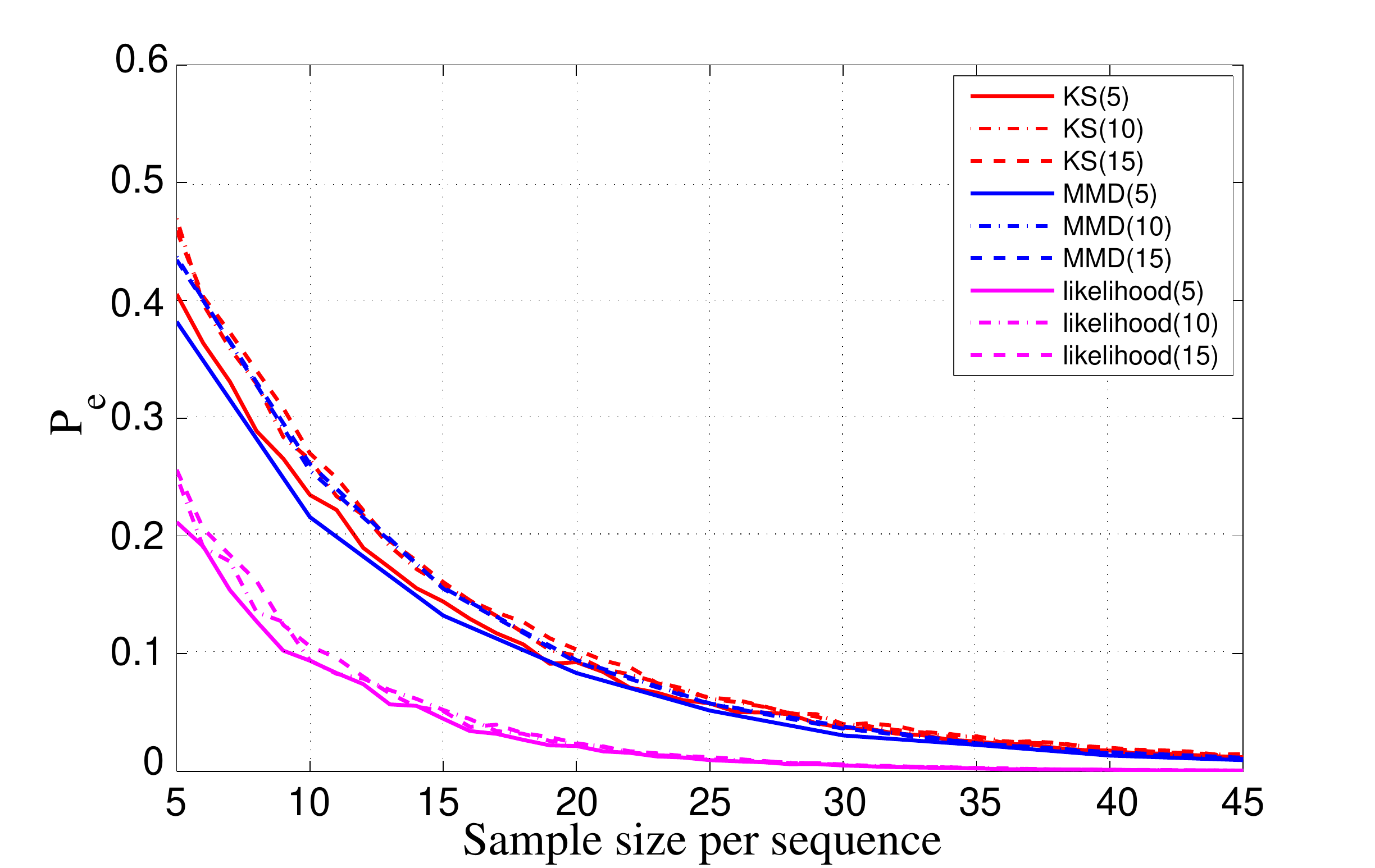}
	\caption{Comparison of error probabilities for different hypothesis testing algorithms for Gaussian distributions with different means.}
	\label{fig:meandiff_fortable}
\end{figure}
Estimates of error decay exponent  of KS and MMD based tests on a multi-hypothesis testing problem are presented for the problem considered in the first experiment. {Note that the theoretical lower bounds in the table correspond to the achievable discrimination rates of the methods asymptotically.} Fano's bound (FB in the table) is estimated by using data-dependent partition estimators of Kullback-Leibler divergence \cite{1499042}. The parametric upper bound is based on the maximum likelihood test, which can serve as an upper bound on the error decay exponent (and hence intuitively on the discrimination capacity). It can be seen from the table that both the KS and MMD tests do achieve an exponential error decay and have positive discrimination rates as we show in our theorems. Clearly, the empirical values of the bounds for both tests are better than the corresponding theoretical values
. More importantly, both of the empirical lower bounds are close to the likelihood upper bound, demonstrating that the actual performance of the two tests are satisfactory. We also note that the Fano's upper bound is not very close to the lower bound.

To better illustrate the bounds in Table I, we provide experimental results with different number of hypotheses $M$ in Figure 4. In particular, we present the simulation results with $M=5, 10, 15$. We use a similar experiment setting as that in the first experiment, where Gaussian distributions have the same variance and different mean values, and the mean values are $\{-2,-1,\ldots,2\}, \{-4.5,-3.5, \ldots, 4.5\}$ and $\{-7,-6,\ldots,7\}$ respectively. The parametric maximum likelihood test serves as an unpper bound for the error decay exponent for all of the three cases. Similar to the case $M=5$, KS and MMD nonparametric tests achieve an exponential error decay and hence the positive discrimination rates for the cases $M=10$ and $M=15$.}

We now conduct experiments with composite distributions. First, we still use five hypotheses with Gaussian distributions with variance  $\sigma^2=1$ and different mean values $\mu=\{-2,-1,0,1,2\}$. For each hypothesis, we vary the mean values by $\pm 0.1$. Thus, within each hypothesis, there are three different distributions with mean values in $\{\mu-0.1, \mu, \mu+0.1\}$. The results are presented in Figure \ref{fig:comp_meandiff}. As expected, the performance improves as the sample size $n$ increases. The two tests perform almost identically, with the MMD-based test slightly outperforming the KS-based test for small $n$. 

We again vary the variances of the Gaussian distributions as in the second experiment in a similar way. In particular, the variances in the same class are $\{(\sigma-0.1)^2,\sigma^2,(\sigma+0.1)^2\}$, and $\sigma \in  \{0.5,1,1.5,2,2.5\}$ . In Figure \ref{fig:comp_variancediff}, we observe the performance improvement as the sample size $n$ increases. Different from the results in the second experiment, the MMD-based test outperforms the KS-based test in the composite setting. 

\begin{figure}[!htb]
	\centering
	\includegraphics[width=0.45\textwidth,height=2.5in]{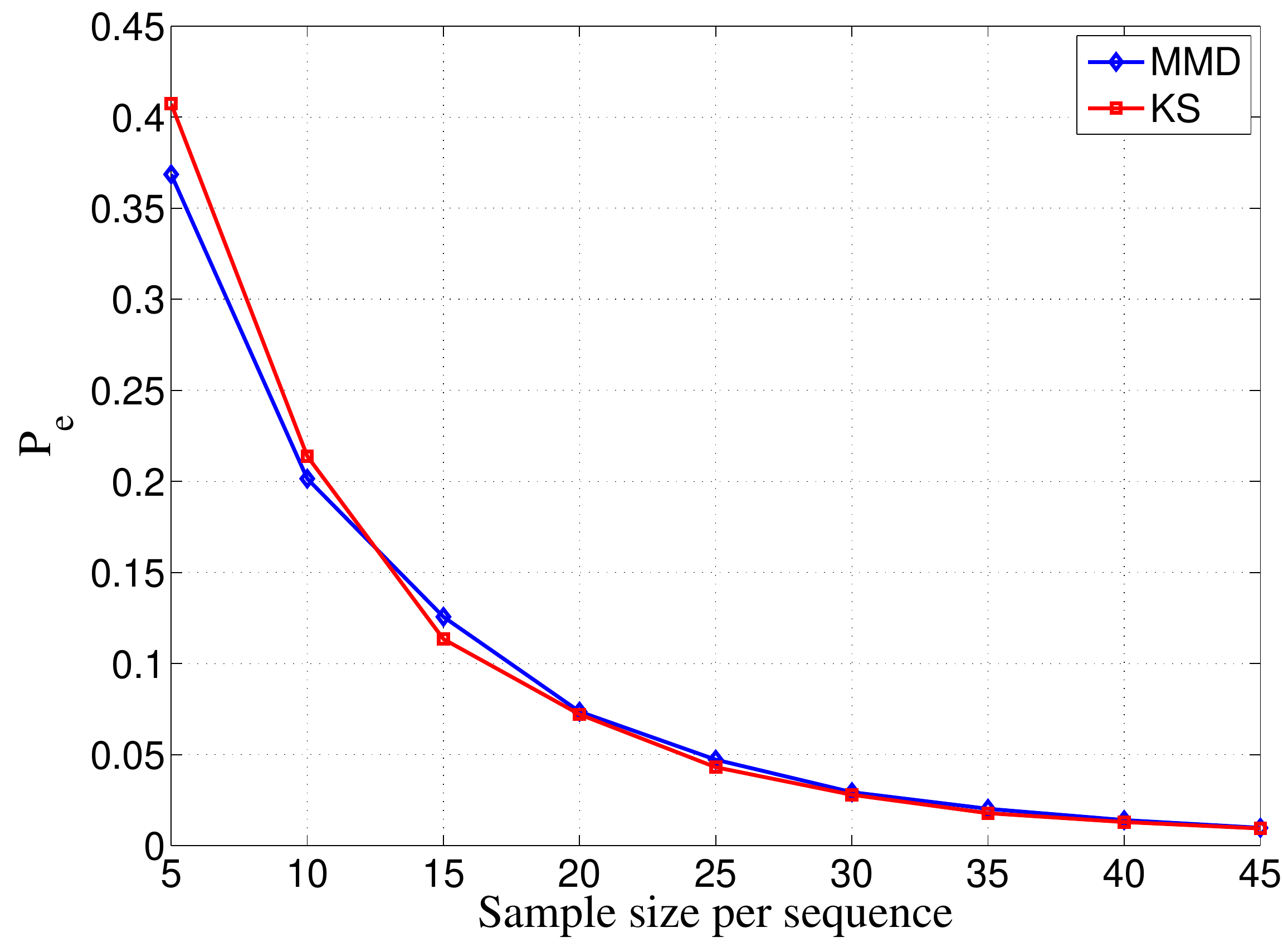}
	\caption{Comparison of error probabilities of different hypothesis testing algorithms for composite Gaussian distributions with different means.}
	\label{fig:comp_meandiff}
\end{figure}

\begin{figure}[!htb]
	\centering
	\includegraphics[width=0.45\textwidth,height=2.5in]{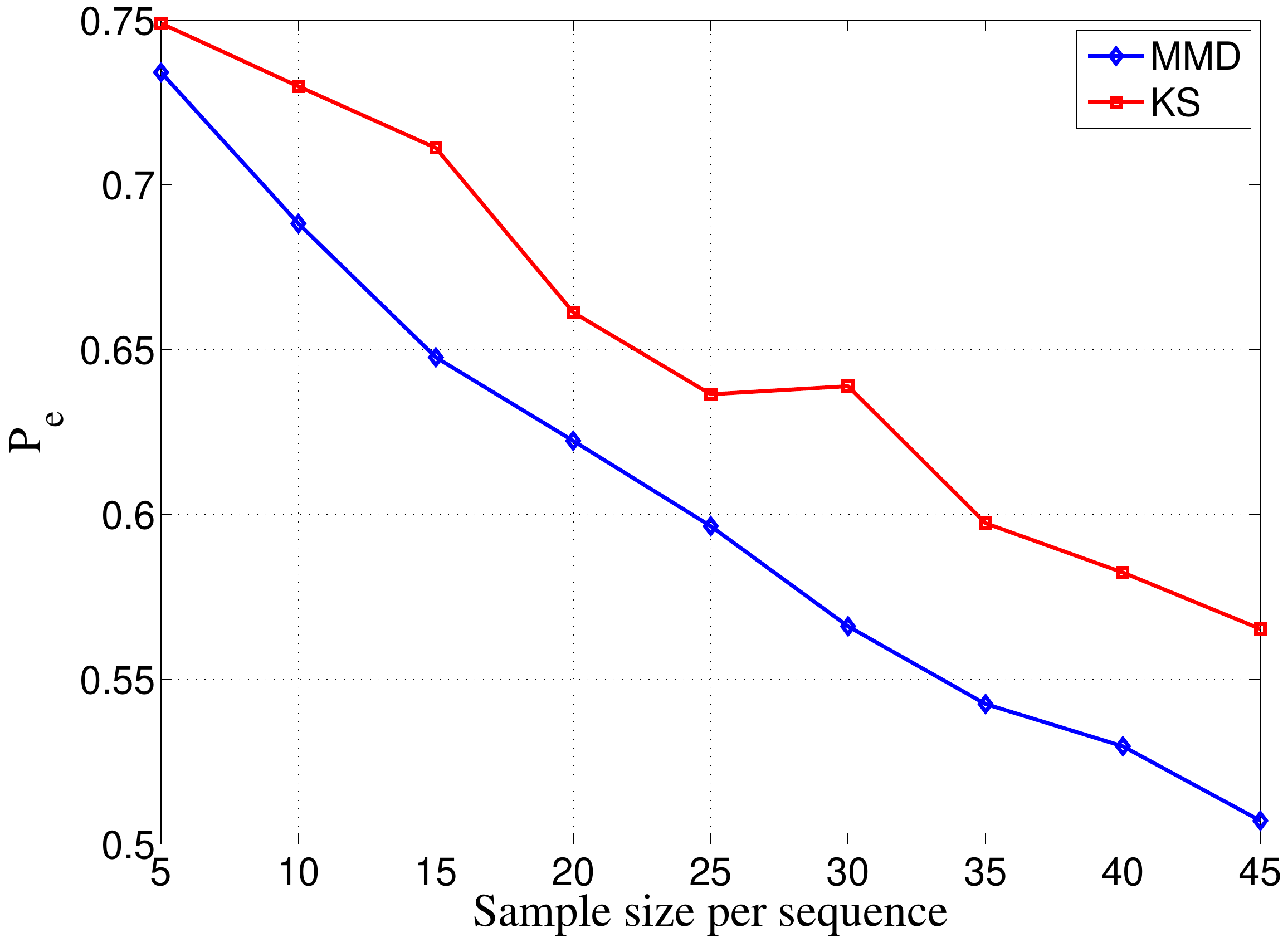}
	\caption{Comparison of error probabilities of different hypothesis testing algorithms for composite Gaussian distributions with different variances.}
	\label{fig:comp_variancediff}
\end{figure}

\section{Conclusion}
This paper developed a nonparametric composite hypothesis testing approach for arbitrary distributions based on the maximum mean discrepancy (MMD) and Kolmogorov-Smirnov (KS) distance measure based tests. We introduced the information theoretic notion of discrimination capacity that was defined for the regime where the number of hypotheses scales along with the sample size. We also provided characterization of the corresponding error exponent and the discrimination rate, i.e., a lower bound on the discrimination capacity. Our framework can been extended to unsupervised learning problems and similar performance limits can be investigated.

\appendix
\subsection{Proof of Theorem \ref{comp_mmd}}\label{proof_comp_mmd}
		The proof uses the following inequality.
		\begin{lemma}{[McDiarmid's Inequality \cite{mcdiarmid1989method}]}\label{lemma:mcd}
			Let $f: \mathcal X^m \rightarrow \mathbb R$ be a function such that for all $i\in \{1,\ldots, m\}$, there exist $c_i \le \infty$ for which
			\begin{align}
			\sup \limits_{X\in \mathcal X^m , \tilde{x} \in \mathcal{X} } |g(x_1,\ldots,x_{i-1},\tilde{x}, x_{i+1},\ldots,x_m)|\le c_i,
			\end{align}
			where $g(x_1,\ldots,x_{i-1},\tilde{x}, x_{i+1},\ldots,x_m)=f(x_1,\ldots, x_m)-f(x_1,\ldots,x_{i-1},\tilde{x}, x_{i+1},\ldots,x_m)$.
			Then for all probability measure $p$ and every $\epsilon>0$,
			\begin{align}
			P_X\left(f(X)-\mE_X[f(X)] >\epsilon\right)< \exp\left(-\frac{2\epsilon^2}{\sum_{i=1}^{m}c_i^2}\right),	
			\end{align}
			where $X$ denotes $(x_1,\ldots,x_m)$, $\mE_X[\cdot]$ denotes the expectation over the $m$ random variables $x_i \sim p$, and $P_X$ denotes the probability over these $m$ variables.
		\end{lemma}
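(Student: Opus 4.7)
The plan is to prove McDiarmid's inequality via the classical Doob-martingale plus Hoeffding-lemma argument. First I would define the filtration $\mathcal{F}_i=\sigma(X_1,\ldots,X_i)$ with $\mathcal{F}_0$ trivial, and set up the Doob martingale $Z_i=\mathbb{E}[f(X)\mid \mathcal{F}_i]$, so that $Z_0=\mathbb{E}[f(X)]$ and $Z_m=f(X)$. The telescoping identity
\begin{equation*}
f(X)-\mathbb{E}[f(X)]=\sum_{i=1}^{m}(Z_i-Z_{i-1})=\sum_{i=1}^{m}D_i
\end{equation*}
reduces the tail bound on $f(X)-\mathbb{E}[f(X)]$ to a tail bound on a sum of martingale differences.

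Next I would use the bounded-differences hypothesis to control the conditional range of each $D_i$. Conditional on $\mathcal{F}_{i-1}$, the random variable $D_i$ is a function of $X_i$ alone (after integrating out $X_{i+1},\ldots,X_m$), and because swapping the $i$-th coordinate changes $f$ by at most $c_i$, one can show that there exist $\mathcal{F}_{i-1}$-measurable random variables $A_i\le B_i$ with $B_i-A_i\le c_i$ such that $A_i\le D_i\le B_i$ almost surely. This is the technical heart of the argument and is where the independence of the $X_i$'s enters: independence lets us replace the conditional expectation over $X_i,\ldots,X_m$ by a pointwise bound obtained from the bounded-differences condition.

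With the conditional boundedness in hand, I would invoke Hoeffding's lemma to obtain, for every $\lambda>0$,
\begin{equation*}
\mathbb{E}\bigl[e^{\lambda D_i}\,\big|\,\mathcal{F}_{i-1}\bigr]\le \exp\!\left(\frac{\lambda^2 c_i^2}{8}\right).
\end{equation*}
Chaining these conditional moment-generating-function bounds by the tower property gives
\begin{equation*}
\mathbb{E}\bigl[e^{\lambda(f(X)-\mathbb{E}[f(X)])}\bigr]\le \exp\!\left(\frac{\lambda^2}{8}\sum_{i=1}^{m}c_i^2\right).
\end{equation*}
Applying Markov's inequality to $e^{\lambda(f(X)-\mathbb{E}[f(X)])}$ and then optimizing over $\lambda$ (the optimal choice is $\lambda=4\epsilon/\sum c_i^2$) yields the announced bound $\exp(-2\epsilon^2/\sum_i c_i^2)$.

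The main obstacle is establishing the conditional range bound $B_i-A_i\le c_i$ on $D_i$; once this is in place, Hoeffding's lemma and the tower property do the rest mechanically. A minor care-point is the strict inequality in the stated conclusion versus the $\le$ produced by the MGF/Markov argument, which is harmless because the exponential tail bound is actually strict whenever the $c_i$ are finite and $\epsilon>0$. I would not attempt to re-derive Hoeffding's lemma itself but cite it as a standard fact about bounded zero-mean random variables.
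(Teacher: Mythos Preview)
Your proof proposal is the standard and correct proof of McDiarmid's inequality via the Doob martingale decomposition together with Hoeffding's lemma and the tower property. There is nothing to compare against, however: the paper does not prove this lemma at all. It is stated with a citation to McDiarmid's original paper and then invoked as a black box in the proof of Theorem~\ref{comp_mmd}. So your write-up goes well beyond what the paper itself does; if your goal is to match the paper, you can simply cite the result and move on.
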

		
To apply the McDiarmid's inequality, we first define the following quantity
\begin{align}
\bigtriangleup_m(\x^\alpha)=\text{MMD}^2(\x_{m,i_m},\y) - \text{MMD}^2(\x_{m^\prime,i_{m^\prime}},\y)
\end{align}
where $\x^\alpha:=\{\x_{m,i_m},\x_{m^\prime,i_{m^\prime}},\y\}$ consists of 
$3n$ data samples.

\noindent Given $H_i$, i.e., the test sequence $\y$ is generated by $\mathcal{P}_i$, it can be shown that
\begin{align}\label{di}
\mathbb{E}[\text{MMD}^2(\x_{m,i_m},\y)] \le D_I
\end{align}
and
\begin{align}\label{do}
\mathbb{E}[\text{MMD}^2(\x_{m^\prime,i_{m^\prime}},\y)]\ge D_O.
\end{align}

We next define $\x^\alpha_{-s}$ the same as $\x^\alpha$ except that the $s$-th component $\x^\alpha_s$ is removed. We also define $\tilde{\x}^\alpha_s$ as another sequence generated by the same underlying distribution for ${\x}^\alpha_s$. Then, $\x_s^\alpha$ affects $\bigtriangleup_m(\x^\alpha)$ via the following three cases.
\begin{itemize}
	\item Case 1: $\x^\alpha_s$ is in the sequence $\x_{m,i_m}$. In this case, $\x^\alpha_s$ affects $\bigtriangleup_m(\x^\alpha)$ through the following terms
	{
	\begin{align}
	\frac{2}{n(n-1)}\sum\limits_{l=1,l\ne s}^n k(\x_s^\alpha,\x_{m,i_m}(l))-\frac{2}{n^2}\sum\limits_{l=1}^n k(\x^\alpha_s, \y(l)).
	\end{align}}
	\item Case 2: $\x^\alpha_s$ is in the sequence $\x_{m^\prime,i_{m^\prime}}$. In this case, $\x^\alpha_s$ affects $\bigtriangleup_m(\x^\alpha)$ through the following terms
	{
	\begin{align}
	\frac{2}{n(n-1)}\sum\limits_{l=1,l\ne s}^n k(\x^\alpha_s, \y(l))
	-	\frac{2}{n^2}\sum\limits_{j=1}^n k(\x_s^\alpha,\x_{m^\prime,i_{m^\prime}}(l)).
	\end{align}}
	
	\item Case 3: $\x^\alpha_s$ is in the sequence $\y$. In this case, $\x^\alpha_s$ affects $\bigtriangleup_m(\x^\alpha)$ through the following terms
	{
	\begin{align}
	\frac{2}{n^2}\sum\limits_{l=1}^n k(\x_s^\alpha,\x_{m,i_m}(l))
	-\frac{2}{n^2}\sum\limits_{l=1}^n k(\x^\alpha_s, \x_{m^\prime,i_{m^\prime}}(l)).
	\end{align}}
\end{itemize}	
Thus, since the kernel is bounded, i.e., $0 \leq k(x; y) \leq \K$ for any $(x, y)$,  considering the above three cases, the variation in the value of $\bigtriangleup_m(\x^\alpha)$ when $\x^\alpha_s$ varies is bounded by {$\frac{4\K}{n}$.} Then, we can obtain the following bound
\begin{align}\label{mmd-seq-bound}
|\bigtriangleup_m(\x^\alpha_{-s}, \x^\alpha_s)- \bigtriangleup_m(\x^\alpha_{-s}, \tilde{\x}^\alpha_s)| \le \frac{8\K}{n}.
\end{align}

\noindent Assuming the test sequence is generated from the true underlying set of distributions, $\mathcal P_m$, we now apply Lemma \ref{lemma:mcd} and obtain the following bound on the probability of error between two classes
{
\begin{align}\label{Pe-same-length-mmd}
& P\left(\text{MMD}^2(\x_{m,i_m},\y) \ge \text{MMD}^2(\x_{m^\prime,i_{m^\prime}},\y)\right) \nn\\
&=  P(\text{MMD}^2(\x_{m,i_m},\y)- \text{MMD}^2(\x_{m^\prime,i_{m^\prime}},\y) \geq 0)\nn \\
 &=	P\left(\bigtriangleup_m(\x^\alpha)-\mathbb{E}[\bigtriangleup_m(\x^\alpha)]\ge -\mathbb{E}[\bigtriangleup_m(\x^\alpha)] \right) \nn \\
&\le	 P\left(\bigtriangleup_m(\x^\alpha)-\mathbb{E}[\bigtriangleup_m(\x^\alpha)]\ge D_O-D_I \right) \nn \\
& \le \exp \left(-\frac{2(D_O-D_I)^2}{64\K^2\frac{3}{n}}\right)\nn\\
&\le \exp \left(-\frac{n(D_O-D_I)^2}{{96\K^2}}\right).
\end{align}}
The first inequality is based on the results in \eqref{di} and \eqref{do} that
\begin{align}
	-\mathbb{E}[\bigtriangleup_m(\x^\alpha)] \ge D_O-D_I.
\end{align}
\noindent Therefore, we can bound the probability of error as
{
\begin{align}
 P_e & =P\left(\exists m^\prime \ne m, i_m^\prime \in I_1^{M_m^\prime},\bigtriangleup_m(\x^\alpha)\ge 0, \forall i_m \in I_1^{M_m}\right)\nn\\
& \le \frac{1}{M} \sum_{m=1}^{M}\sum_{m^{\prime}\neq m} \min_{i_m \in I_1^{M_m}} \exp \left(-\frac{n(D_O-D_I)^2}{{96\K^2}}\right)\nn \\
&\leq M\exp \left(-\frac{n(D_O-D_I)^2}{{96\K^2}}\right).
\end{align}}

Thus, the achievable discrimination rate is
{
\begin{align}
D = \frac{\log e}{96\K^2} (D_O-D_I)^2.
\end{align}}

\subsection{Proof of Theorem \ref{comp_ks}}\label{proof_comp_ks}	
We first introduce two lemmas to help establish the  theorem.
\begin{lemma}\cite{Massart1990}\label{lemma:KStest1}
	Suppose $\mathbf{x}$ is generated by $p$ and $F_{\mathbf{x}}(a)$ is the corresponding empirical c.d.f.. Then
	\begin{equation*}
	P\bigg(\sup_{a\in\mathbb{R}}\bigg|F_{\mathbf{x}}(a)-F_{p}(a)\bigg|>\epsilon\bigg)\leq 2\exp\big(-2n\epsilon^{2}\big).
	\end{equation*}
\end{lemma}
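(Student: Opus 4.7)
The plan is to prove the Dvoretzky–Kiefer–Wolfowitz inequality with the sharp Massart constant by reducing to uniform samples and then controlling the one-sided empirical process via an exponential supermartingale. First, I would invoke the probability integral transform: if $X \sim p$ with CDF $F_p$, then $U := F_p(X)$ is $U[0,1]$, and the map $x \mapsto F_p(x)$ sends the event $\{F_{\mathbf{x}}(a) - F_p(a) : a \in \mathbb{R}\}$ to $\{\hat{G}_n(t) - t : t \in [0,1]\}$ where $\hat{G}_n$ is the empirical CDF of $n$ i.i.d.\ uniform samples. This reduction is monotonic in $a$ and so preserves the supremum, letting us assume without loss of generality that $p$ is uniform on $[0,1]$.

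Next I would split the two-sided supremum via a union bound:
\begin{equation*}
P\bigl(\sup_{t}|\hat{G}_n(t)-t|>\epsilon\bigr) \le P\bigl(\sup_t(\hat{G}_n(t)-t)>\epsilon\bigr) + P\bigl(\sup_t(t-\hat{G}_n(t))>\epsilon\bigr),
\end{equation*}
so that it suffices to establish the one-sided bound $P(\sup_t(\hat{G}_n(t)-t)>\epsilon)\le \exp(-2n\epsilon^2)$ and combine with the symmetric statement (the distribution of $1-U_i$ is also $U[0,1]$, yielding the other tail). For the one-sided bound, I would reorder the jumps of $\hat{G}_n$ by reversing time and set $M_k^\lambda := \exp\bigl(\lambda(\hat{G}_n(T_k) - T_k)\bigr)/\varphi_k(\lambda)$ for a suitable normalizer $\varphi_k$, making $M_k^\lambda$ a backward supermartingale with respect to the natural filtration. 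Doob's maximal inequality for nonnegative supermartingales then yields a Chernoff-type tail bound whose exponent, after optimizing in $\lambda$, is precisely $-2n\epsilon^2$. Equivalently (and this is Massart's own route), one can use the exchangeability of the order statistics and the representation of the empirical process via a Poissonized binomial to obtain the sharp constant through a quadratic-in-$\lambda$ log-moment estimate.

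The hard part will be arranging the argument so the constant is exactly $2$ rather than the weaker $C$ that a crude grid-plus-Hoeffding discretization gives. Massart's contribution is precisely this sharpening: the naive bound via discretizing $a$ at the $n$ jump points and using a union bound with Hoeffding's inequality loses a factor and gives at best $(n+1)\cdot 2\exp(-2n\epsilon^2)$, which is useless for small $\epsilon$. The supermartingale / optimized exponential-moment approach removes the union bound over jumps by exploiting the monotone structure of $\hat{G}_n - t$, and the careful choice of $\varphi_k(\lambda)$ is what produces the tight quadratic exponent.

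Finally, since this lemma is cited directly from \cite{Massart1990}, the paper likely imports the result as a black box rather than reproducing the full proof; my proposal therefore emphasizes the reduction and supermartingale skeleton but defers the sharp-constant calculation to the reference. The two-sided factor of $2$ in the stated bound is the visible artifact of the union bound over the two one-sided suprema, which is the only step that must be done in-line before appealing to the sharp one-sided DKW.
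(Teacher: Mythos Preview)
Your reading is correct: the paper does not prove this lemma at all. It is stated with the citation \cite{Massart1990} and immediately used as a black box inside the proof of Lemma~\ref{lemma:KStest4}, where the DKW--Massart bound is applied three times (once to each of $\mathbf{x}_1,\mathbf{x}_2,\mathbf{x}_3$) after a triangle-inequality decomposition. So there is nothing to compare against, and your final paragraph already anticipates this.

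As a minor comment on the sketch itself: the reduction to $U[0,1]$ via the probability integral transform and the two-sided union bound are exactly right, and the recognition that a grid-plus-Hoeffding argument only gives a polynomial prefactor is the correct diagnosis of why the sharp constant is nontrivial. Your description of the one-sided step as a ``backward supermartingale'' with Doob's inequality is a bit loose relative to Massart's actual argument, which proceeds through an explicit analysis of the distribution of $\sup_t(\hat G_n(t)-t)$ (Smirnov's formula) combined with a delicate combinatorial/analytic estimate rather than a clean martingale stopping. But since the paper treats the lemma as an imported result, none of this affects the validity of your proposal for the paper's purposes.
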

\begin{lemma}\label{lemma:KStest4}
	Suppose two distribution clusters $\mathcal{P}_{1}$ and $\mathcal{P}_{2}$ satisfy \eqref{eq:KSassumptionHTS}. Assume that for $j=1,2$, $\mathbf{x}_{j}\sim p_{j}$ satisfying $p_{j}\in \mathcal{P}_{j}$. Then for any $\mathbf{x}_{3}\sim p_{3}$ satisfying $p_{3}\in\mathcal{P}_{1}$,
	\begin{equation*}
	\begin{aligned}
	{P}\bigg(d_{KS}(\mathbf{x}_{1},\mathbf{x}_{3}) \geq d_{KS}(\mathbf{x}_{2},\mathbf{x}_{3})\bigg)\leq 6\exp{\big(-\frac{n(D_{O}-D_{I})^{2}}{8}\big)}.
	\end{aligned}
	\end{equation*}
\end{lemma}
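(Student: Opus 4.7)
The plan is to exploit the triangle inequality for the KS distance (which is the sup-norm between CDFs, hence a genuine metric on the space of distribution functions) together with the DKW concentration bound of Lemma 1 and the inter/intra-set geometry in \eqref{eq:KSassumptionHTS}. Write $a := \sup_{t}|F_{\mathbf{x}_1}(t)-F_{p_1}(t)|$, $b := \sup_{t}|F_{\mathbf{x}_2}(t)-F_{p_2}(t)|$, and $c := \sup_{t}|F_{\mathbf{x}_3}(t)-F_{p_3}(t)|$ for the three empirical-to-true deviations.

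First, I would upper bound $d_{KS}(\mathbf{x}_1,\mathbf{x}_3)$ by chaining through the true CDFs: by the triangle inequality in the sup-norm,
\begin{equation*}
d_{KS}(\mathbf{x}_1,\mathbf{x}_3) \le a + d_{KS}(p_1,p_3) + c \le a + D_I + c,
\end{equation*}
where the last step uses that $p_1,p_3 \in \mathcal{P}_1$, hence their KS distance is at most the cluster diameter $D_I$. Symmetrically, I would lower bound $d_{KS}(\mathbf{x}_2,\mathbf{x}_3)$ by the reverse triangle inequality,
\begin{equation*}
d_{KS}(\mathbf{x}_2,\mathbf{x}_3) \ge d_{KS}(p_2,p_3) - b - c \ge D_O - b - c,
\end{equation*}
using $p_2\in\mathcal{P}_2,\, p_3\in\mathcal{P}_1$ so their KS distance is at least $D_O$. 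Subtracting, whenever the error event $\{d_{KS}(\mathbf{x}_1,\mathbf{x}_3)\ge d_{KS}(\mathbf{x}_2,\mathbf{x}_3)\}$ occurs, we must have $D_O - D_I \le a + b + 2c$.

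Next, I would split this deterministic implication across the three deviations by a union bound: the event $\{a+b+2c \ge D_O - D_I\}$ forces at least one of $\{a \ge (D_O-D_I)/4\}$, $\{b \ge (D_O-D_I)/4\}$, or $\{c \ge (D_O-D_I)/4\}$ (the last one because $2c \ge (D_O-D_I)/2$ iff $c\ge (D_O-D_I)/4$). Finally, I would apply Lemma \ref{lemma:KStest1} (the Massart/DKW bound) to each of the three events with $\epsilon = (D_O-D_I)/4$, yielding $2\exp(-n(D_O-D_I)^2/8)$ per term, and sum the three to obtain the claimed factor of $6$.

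The main conceptual step is the triangle-inequality splitting that cleanly separates the randomness (the three empirical-vs-true deviations) from the deterministic geometric quantities $D_I$ and $D_O$; once that is in place the rest is mechanical. The only subtlety is the choice of threshold: $(D_O-D_I)/4$ is forced because the sample $\mathbf{x}_3$ appears in both $d_{KS}(\mathbf{x}_1,\mathbf{x}_3)$ and $d_{KS}(\mathbf{x}_2,\mathbf{x}_3)$, so its deviation $c$ enters twice in the combined bound, and this is precisely what produces the constant $8$ in the exponent (rather than $2$).
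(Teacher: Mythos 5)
Your proof is correct and follows essentially the same route as the paper's: triangle-inequality chaining through the true CDFs to reduce the error event to $a+b+2c \ge D_O-D_I$, a three-way union bound at threshold $(D_O-D_I)/4$, and Massart's inequality on each term. Your version is in fact slightly cleaner, since you use $D_I$ and $D_O$ directly where the paper inserts intermediate constants $d_1,d_2$ and then passes to the limit by continuity.
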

\begin{proof}
	By the triangle inequality and the property of supremum, we have
	\begin{equation*}
	\begin{aligned}
	d_{KS}(\mathbf{x}_{1},\mathbf{x}_{3}) & < d_{KS}(p_{1},\mathbf{x}_{1}) +  d_{1} + d_{KS}(p_{3},\mathbf{x}_{3}),\\
	d_{KS}(\mathbf{x}_{2},\mathbf{x}_{3}) & > -d_{KS}(p_{3},\mathbf{x}_{3}) +  d_{2} - d_{KS}(p_{2},\mathbf{x}_{2}).
	\end{aligned}
	\end{equation*}
	where $D_{I}<d_{1}<d_{2}<D_{O}$. Then
	\begin{equation*}
	\begin{aligned}
	&\quad\:{P}\bigg(d_{KS}(\mathbf{x}_{1},\mathbf{x}_{3}) \geq d_{KS}(\mathbf{x}_{2},\mathbf{x}_{3})\bigg)\\
	& \leq {P}\bigg(d_{KS}(p_{1},\mathbf{x}_{1}) + d_{KS}(p_{3},\mathbf{x}_{3}) + 2d_{KS}(p_{2},\mathbf{x}_{2}) >\hat{d}\bigg)\\
	& \leq {P}\bigg(d_{KS}(p_{1},\mathbf{x}_{1})> \frac{\hat{d}}{4}\bigg) + {P}\bigg(d_{KS}(p_{3},\mathbf{x}_{3})> \frac{\hat{d}}{4}\bigg)\\
	&\qquad + {P}\bigg(d_{KS}(p_{2},\mathbf{x}_{2})> \frac{\hat{d}}{4}\bigg)\\
	& \leq 6\exp{\big(-\frac{n\hat{d}^{2}}{8}\big)}.
	\end{aligned}
	\end{equation*}
	where $\hat{d}=d_{2}-d_{1}$. By the continuity of the exponential function, we have
	\begin{equation}\label{lemma-to-prove-ks}
	\begin{aligned}
	{P}\bigg(d_{KS}(\mathbf{x}_{1},\mathbf{x}_{3}) \geq d_{KS}(\mathbf{x}_{2},\mathbf{x}_{3})\bigg)\leq 6\exp{\big(-\frac{n(D_{O}-D_{I})^{2}}{8}\big)}.\qedhere
	\end{aligned}
	\end{equation}
\end{proof}

Without loss of generality, assume that the probability that
$\mathbf{y}$ is generated from $p_{k,i_{k}}$ is $\frac{1}{M}$ for all $m\in\{1,\ldots,M\}$ and $i_{m}\in\{1,\ldots,M_{m}\}$. By Lemma \ref{lemma:KStest4} and the union bound, the probability of error is bounded by
\begin{equation}\label{Pe-same-length-ks}
\begin{aligned}
P_{e} &\leq  \sum_{m=1}^{M}\sum_{i_m=1}^{M_{m}}\sum_{m^{\prime}\neq m}\sum_{i_{m^\prime}=1}^{M_{m^{\prime}}} P\bigg(d_{KS}(\mathbf{x}_{m,i_m},\mathbf{y})\geq\\
&\quad  d_{KS}(\mathbf{x}_{m^{\prime},i_{m^\prime}},\mathbf{y})\big|\mathbf{y}\sim p_{m,i_{m}},\: i_{m}\in \{1,\ldots,M_{m}\}\big|\bigg)\frac{1}{M}\\
&\leq 6M\exp{\big(-\frac{n(D_{O}-D_{I})^{2}}{8}\big)}.
\end{aligned}
\end{equation}

Thus, the achievable discrimination rate is
\begin{align}
D = \frac{\log e}{8} (D_O-D_I)^2.
\end{align}

\subsection{Proof of Remark \ref{dis_capacity_comp}}\label{proof_capacity_comp}

Here we provide an alternative proof for Remark \ref{dis_capacity_comp}, which is different from that given in [Lemma 2.10 \cite{Tsybakov:2008:INE:1522486}].

By Fano's inequality \cite{Cover06}, we obtain
 \begin{align}
 H(h|\y) \leq 1+ P_e \log (M-1).
 \end{align}
 Since $h$ is uniformly distributed over all the hypotheses, we have that
 \begin{align}
\log (M)&=H(h)=I(h;\y)+H(h|\y) \nonumber \\
 &\leq I(h;\y)+ 1+P_e \log M.\label{eq:fanoBound_comp}
 \end{align}
 Let $P_h(h)$, $P_{\y}(\y)$, and $P_{h,\y}(h,\y)$ represent the marginal and joint distributions of $h$ and $\y$. Recall that we represent the likelihood function of $\y$ under $m$ as $P(\y|h)=p_h(\y)$. The mutual information between $h$ and $\y$ can be expressed in terms of likelihood functions as
 \begin{align}
 I(h;\y)  =&\sum_{h=1}^{M} \sum_{\y} P_{h,\y}(h,\y) \log \frac{P_{h,\y}(h,\y)}{P_h(h)P_{\y}(\y)} \nonumber\\
 =&\frac{1}{M}\sum_{h=1}^{M}\sum_{\y} p_h(\y) \log \frac{p_h(\y)}{P_{\y}(\y)} \nonumber\\
 =&\frac{1}{M}\sum_{h=1}^{M}\sum_{\y} p_h(\y) \log \frac{p_h(\y)}{\sum_{h^\prime=1}^{M}\frac{1}{M}p_{h^\prime}(\y)} \nonumber \\
 =&\frac{1}{M}\sum_{h=1}^{M}\sum_{\y} p_h(\y) \nn\\
 &\cdot\left[ \log p_h(\y)-\log \sum_{h^\prime=1}^{M}\frac{1}{M}p_{h^\prime}(\y) \right]
 \end{align}
 Applying Jensen's inequality, the mutual information can be further upper bounded as
 \begin{align}
 I(h;\y)  \leq & \frac{1}{M}\sum_{h=1}^{M}\sum_{\y} p_h(\y)\nn\\
 & \cdot \left[ \log p_h(\y)- \sum_{h^\prime=1}^{M}\frac{1}{M}\log p_{h^\prime}(\y) \right]
 \end{align}
 Simplifying, we finally have
 \begin{align}
 I(h;\y)   \leq & \frac{1}{M}\sum_{h=1}^{M}\sum_{\y} p_h(\y)\nn\\
 &\cdot \left[  \sum_{h^\prime=1}^{M}\frac{1}{M}\log p_h(\y)- \sum_{h^\prime=1}^{M}\frac{1}{M}\log p_{h^\prime}(\y) \right] \nonumber \\
 =& \frac{1}{M}\frac{1}{M}\sum_{h=1}^{M}\sum_{h^\prime=1}^{M}\sum_{\y}p_h(\y)\log \frac{p_h(\y)}{p_{h^\prime}(\y)} \nonumber \\
 =& \frac{1}{M}\frac{1}{M}\sum_{h=1}^{M}\sum_{h^\prime=1}^{M}n D_{KL}(p_h\|p_{h^\prime})\nonumber \\
 =& n \mE_{h,h^\prime} D_{KL}(p_h\|p_{h^\prime}).\label{eq:infoBound_comp}
 \end{align}
 where $h^\prime$ has the same distribution as $h$, but is independent from $h$. Substituting (\ref{eq:infoBound_comp}) into the (\ref{eq:fanoBound_comp}), we obtain
 \begin{align}
 \log M&\leq n \mE_{h,h^\prime} D_{KL}(p_h\|p_{h^\prime})+ 1+\log M P_e
 \end{align}
 which implies that
 \begin{align}
 \frac{\log M}{n}&\leq \frac{ \mE_{h,h^\prime} D_{KL}(p_h\|p_{h^\prime})}{1-P_e} + \frac{1}{n(1-P_e)}.
 \end{align}
 Since $M = 2^{nD}$ and the above right-hand side is independent on $D$, we can have
 \begin{align}
 D&\leq \frac{ \mE_{m,m^\prime} D_{KL}(p_m\|p_{m^\prime})}{1-P_e} + \frac{1}{n(1-P_e)}.
 \end{align}
 Thus, for any test that satisfies $P_e \rightarrow 0$ as $n \rightarrow \infty$, $D\leq {\limsup_{M\rightarrow \infty}}E_{m,m^\prime} D_{KL}(p_m\|p_{m^\prime})$ as $n \rightarrow \infty$. Therefore, the discrimination capacity $\bar{D}$ is upper bounded by
 \begin{align}
 \bar{D} \leq {\limsup_{M\rightarrow \infty}}\mE_{m,m^\prime} D_{KL}(p_m\|p_{m^\prime}).
 \end{align}

{\subsection{Sketch of the Proof for \eqref{Pe-multi-length-MMD} and \eqref{Pe-multi-length-KS}}	\label{multi-length-proof-sketch}
To prove \eqref{Pe-multi-length-MMD}, we follow the steps to obtain \eqref{Pe-same-length-mmd}. Note that now $\x^\alpha:=\{\x_{m,i_m},\x_{m^\prime,i_{m^\prime}},\y\}$ consists of {$n+\gamma_{m}(n)+\gamma_{m^\prime}(n)$} data samples, and
\begin{align}
|\bigtriangleup_m(\x^\alpha_{-s}, \x^\alpha_s)- \bigtriangleup_m(\x^\alpha_{-s}, \tilde{\x}^\alpha_s)| \le \frac{8\K}{n^\prime},
\end{align}
where $n^\prime \in \{n, \gamma_{m}(n), \gamma_{m^\prime}(n)\}$ and the correponding choice is based on the location of $\x_s^\alpha$. Then, we can write
\begin{align}
& P\left(\text{MMD}^2(\x_{m,i_m},\y) \ge \text{MMD}^2(\x_{m^\prime,i_{m^\prime}},\y)\right) \nn\\
&=  P(\text{MMD}^2(\x_{m,i_m},\y)- \text{MMD}^2(\x_{m^\prime,i_{m^\prime}},\y) \geq 0)\nn \\
&=	P\left(\bigtriangleup_m(\x^\alpha)-\mathbb{E}[\bigtriangleup_m(\x^\alpha)]\ge -\mathbb{E}[\bigtriangleup_m(\x^\alpha)] \right) \nn \\
&\le	 P\left(\bigtriangleup_m(\x^\alpha)-\mathbb{E}[\bigtriangleup_m(\x^\alpha)]\ge D_O-D_I \right) \nn \\
& \le \exp \left(-\frac{2(D_O-D_I)^2}{64\K^2\left(\frac{1}{n}+\frac{1}{\gamma_{m}(n)}+\frac{1}{\gamma_{m^\prime}(n)}\right)}\right)\nn\\
&\le \exp \left(-\frac{\min\{n,\gamma_{\min}(n)\}(D_O-D_I)^2}{{96\K^2}}\right).
\end{align}
Thus, it yields
\begin{align}
P_e & =P\left(\exists m^\prime \ne m, i_m^\prime \in I_1^{M_m^\prime},\bigtriangleup_m(\x^\alpha)\ge 0, \forall i_m \in I_1^{M_m}\right)\nn\\
& \le \frac{1}{M} \sum_{m=1}^{M}\sum_{m^{\prime}\neq m} \min_{i_m \in I_1^{M_m}} \exp \left(-\frac{n(D_O-D_I)^2}{{96\K^2}}\right)\nn \\
&\leq M\exp \left(-\frac{\min\{n,\gamma_{\min}(n)\}(D_O-D_I)^2}{{96\K^2}}\right).
\end{align}

To prove \eqref{Pe-multi-length-KS}, we follow the steps to obtain \eqref{Pe-same-length-ks}. Note that if the sequences $\mathbf{y} ,\mathbf{x}_{m,i_m},\mathbf{x}_{m^{\prime},i_{m^\prime}}$ have length of $n, \gamma_{m}(n), \gamma_{m^\prime}(n)$ respectively,
we can obtain
\begin{align}
P&\bigg(d_{KS}(\mathbf{x}_{m,i_m},\mathbf{y})\geq d_{KS}(\mathbf{x}_{m^{\prime},i_{m^\prime}},\mathbf{y})\bigg)\nonumber\\
\le& 2\exp{\big(-\frac{n(D_{O}-D_{I})^{2}}{8}\big)}+2\exp{\big(-\frac{\gamma_{m}(n)(D_{O}-D_{I})^{2}}{8}\big)}\nonumber \\&+2\exp{\big(-\frac{\gamma_{m^\prime}(n)(D_{O}-D_{I})^{2}}{8}\big)}\nn\\
\le & 6\exp{\big(-\frac{\min\{n, \gamma_{\min}(n)\}(D_{O}-D_{I})^{2}}{8}\big)}.
\end{align}
Thus, it yields
\begin{equation}
\begin{aligned}
P_{e} &\leq  \sum_{m=1}^{M}\sum_{i_m=1}^{M_{m}}\sum_{m^{\prime}\neq m}\sum_{i_{m^\prime}=1}^{M_{m^{\prime}}} P\bigg(d_{KS}(\mathbf{x}_{m,i_m},\mathbf{y})\geq\\
&\quad  d_{KS}(\mathbf{x}_{m^{\prime},i_{m^\prime}},\mathbf{y})\big|\mathbf{y}\sim p_{m,i_{m}},\: i_{m}\in \{1,\ldots,M_{m}\}\big|\bigg)\frac{1}{M}\\
&\leq 6M\exp{\big(-\frac{\min\{n, \gamma_{\min}(n)\}(D_{O}-D_{I})^{2}}{8}\big)}.
\end{aligned}
\end{equation}
}


\bibliographystyle{IEEEtran}
\bibliography{darpa}

\end{document}